\renewcommand{\paragraph}[1]{\vspace{.06in}\noindent\textbf{{#1}.}}
\newcommand{\abr}[1]{\textsc{\MakeLowercase{#1}}}
\newcommand{\rone}{(\emph{i})~}
\newcommand{\rtwo}{(\emph{ii})~}
\newcommand{\rthree}{(\emph{iii})~}
\newcommand{\alg}{\textsc{digits}\xspace}
\newcommand{\newalg}{$\tau$-\textsc{digits}\xspace}
\newcommand{\dist}{\mathds{D}}
\newcommand{\osynth}{\mathcal{O}_\textnormal{syn}}
\newcommand{\overif}{\mathcal{O}_\textnormal{ver}}
\newcommand{\oerror}{\mathcal{O}_\textnormal{err}}
\newcommand{\progs}{\ensuremath{\mathcal{P}}\xspace} 
\newcommand{\sem}[1]{\llbracket #1\rrbracket}
\newcommand{\prog}{\ensuremath{P}\xspace} 
\newcommand{\initprog}{\ensuremath{\hat{P}}\xspace}
\newcommand{\post}{\textit{post}\xspace}
\newcommand{\error}{\text{Er}}
\newcommand{\hamm}{\textnormal{Hamming}}
\newcommand{\thresh}{\tau}
\newcommand{\dich}[2]{\operatorname{\Pi}_{#1}({#2})} 
\newcommand{\growth}[2]{\operatorname{\hat{\Pi}}_{#1}({#2})} 
\renewcommand{\leq}{\leqslant}
\renewcommand{\geq}{\geqslant}
\newcommand{\pr}[2][]{%
    \ifstrempty{#1}{%
        \Pr \lbrack {#2} \rbrack
    }{%
        {\textstyle \Pr}_{#1} \lbrack {#2} \rbrack
    }%
}
\newcommand{\expec}[2][]{%
    \ifstrempty{#1}{%
        \textnormal{E} \lbrack {#2} \rbrack
    }{%
        \textnormal{E}_{#1} \lbrack {#2} \rbrack
    }%
}
\begin{document}

\title{Efficient Synthesis with Probabilistic Constraints}
\author{Samuel Drews \and Aws Albarghouthi \and Loris D'Antoni}
\institute{University of Wisconsin--Madison}

\maketitle
\begin{abstract}
We consider the problem of synthesizing a program
given a probabilistic specification of its desired behavior.
Specifically, we study the recent paradigm of \emph{distribution-guided inductive synthesis} (\alg), which iteratively calls a synthesizer
on finite sample sets from a given distribution.
We make theoretical and algorithmic contributions:
\rone We prove the surprising result that \alg only requires a polynomial number of synthesizer calls in the size of the sample set, despite its ostensibly exponential behavior.
\rtwo We present a property-directed version of
\alg that further reduces the number of synthesizer calls,
drastically improving synthesis performance on a range of benchmarks.
\end{abstract}

\section{Introduction}\label{sec:intro}

%
Over the past few years, progress in automatic program synthesis
has touched many application domains, including
automating data wrangling and data extraction tasks~\cite{PolozovG15,RazaG17,WangGS16,Gulwani16,BarowyGHZ15,Gulwani11},
generating network configurations that meet user intents~\cite{popl17net,synet},
optimizing low-level code~\cite{SrinivasanR15,Schkufza0A16}, and more~\cite{Bastani0AL17,Gulwani14}.

The majority of the current work has focused on synthesis
under Boolean constraints. However, often times we require
the program to adhere to a probabilistic specification,
e.g., a controller that succeeds with a high probability,
a decision-making model operating over a probabilistic population model,
a randomized algorithm ensuring privacy, etc.
In this work, we are interested in
(1) investigating probabilistic synthesis from a theoretical perspective and (2) developing efficient algorithmic techniques to tackle this problem.

Our starting point is our recent framework for probabilistic synthesis
called \emph{distribution-guided inductive synthesis} (\alg)~\cite{digits}.
The \alg framework is analogous in nature to the \emph{guess-and-check} loop popularized by counterexample-guided approaches to synthesis and verification (\abr{CEGIS} and \abr{CEGAR}).
The key idea of the algorithm is reducing the probabilistic
synthesis problem to a non-probabilistic one that can be solved
using existing techniques, e.g., \abr{SAT} solvers.
This is performed using the following loop: (1) approximating the input probability distribution with a
finite sample set;
(2) synthesizing a program for
various possible
output assignments of the finite sample set;
and (3) invoking a probabilistic verifier to check if one of the synthesized programs
indeed adheres to the given specification.

 \alg has been shown to theoretically converge
to correct programs when they exist---thanks to learning-theory guarantees.
The primary bottleneck of \alg is the number of expensive calls to the synthesizer,
which is ostensibly exponential in the size of the sample set.
Motivated by this observation, this paper makes theoretical,
algorithmic, and practical contributions:

\begin{itemize}
\item On the theoretical side, we present a detailed analysis of \alg and
prove that it only requires a polynomial number of invocations
of the synthesizer, explaining that the strong empirical performance of the algorithm is not merely due to the heuristics presented in \cite{digits} (Section~\ref{sec:theory}).

\item
On the algorithmic side, we develop an improved version of \alg that is
property-directed, in that it only invokes the synthesizer on instances that have a chance of resulting in a correct program, without sacrificing convergence. We call the new approach \newalg  (Section~\ref{sec:newopt}).

\item On the practical side, we implement \newalg
for sketch-based synthesis and demonstrate
its ability to converge significantly faster than \alg.
We apply our technique to a range of benchmarks, including
illustrative examples that elucidate our theoretical analysis,
probabilistic repair problems of unfair programs,
and probabilistic synthesis of controllers  (Section~\ref{sec:eval}).
\end{itemize}

\section{An Overview of DIGITS}\label{sec:prelims}

In this section, we present the synthesis problem,
the \alg~\cite{digits} algorithm,
and fundamental background on learning theory.

\subsection{Probabilistic Synthesis Problem}

\paragraph{Program Model}
As discussed in~\cite{digits},
\alg searches through some (infinite) set of programs,
but it requires that the set of programs has \emph{finite VC dimension}
(we restate this condition in Section~\ref{sec:convergence}).
Here we describe one constructive way of obtaining such sets of programs
with finite VC dimension:
we will consider sets of programs defined as
\emph{program sketches}~\cite{solar08} in the simple grammar from~\cite{digits},
where a program is written in a loop-free language,
and ``holes'' defining the sketch replace some constant terminals in expressions.%
\footnote{In the case of loop-free program sketches as considered in our program model,
we can convert the input-output relation into a real arithmetic formula
that guaranteedly has finite VC dimension~\cite{goldberg95}.}
The syntax of the language is defined below:
\begin{align*}
    \prog	 \coloneqq~ V \gets E \mid \texttt{if } B \texttt{ then } \prog  \texttt{ else } \prog
    \mid \prog~\prog \mid \texttt{return } V
\end{align*}
Here, $\prog$ is a program,
$V$ is the set of variables appearing in $\prog$,
$E$ (resp. $B$) is the set of linear arithmetic (resp. Boolean) expressions over $V$
(where, again, constants in $E$ and $B$ can be replaced with holes),
and
$V \gets E$ is an assignment.
We assume a vector $\vec{v}_I$ of variables
 in  $V$ that are inputs to the program.
We also assume there is a single Boolean variable
$v_r \in V$ that is returned by the program.\footnote{
Restricting the output to Boolean is required by the algorithm;
other output types can be turned into Boolean by rewriting.
See, e.g., thermostat example in Section~\ref{sec:eval}.}
All variables are real-valued or Boolean.
Given a vector of constant values $\vec{c}$,
where $|\vec{c}| = |\vec{v}_I|$,
we use $\prog(\vec{c})$ to denote
the result of executing $\prog$ on the input
$\vec{c}$.

In our setting, the inputs to a program are distributed according to some
\emph{joint probability distribution} $\dist$
over the variables $\vec{v}_I$.
Semantically, a program $\prog$ is denoted by a \emph{distribution
transformer} $\sem{\prog}$, whose input is a distribution
over values of $\vec{v}_I$ and whose output is a distribution over $\vec{v}_I$ and $v_r$.

%
%

A program also has a   \emph{probabilistic
postcondition}, $\post$, defined as an inequality
over terms
of the form $\pr{B}$,
where $B$ is a Boolean expression over
$\vec{v}_I$  and $v_r$.
Specifically, a probabilistic postcondition
consists of Boolean combinations of the form $e > c$, where $c\in \mathds{R}$
and $e$ is an arithmetic expression over
terms of the form $\pr{B}$,
e.g., $\pr{B_1}/\pr{B_2} > 0.75$.

Given a triple $(\prog, \dist, \post)$,
we say that $P$ is \emph{correct} with respect
to $\dist$ and $\post$, denoted $\sem{\prog}(\dist) \models \post$,
\emph{iff} 
$\post$ is true on the distribution $\sem{\prog}(\dist)$.

\begin{example}
Consider the set of intervals of the form $[0,a] \subseteq [0,1]$
and inputs $x$ uniformly distributed over $[0,1]$
(i.e.\ $\dist = \text{Uniform}[0,1]$).
We can write inclusion in the interval as a (C-style) program  (left)
and consider a postcondition stating that the interval must include
at least half the input probability mass (right):
\begin{center}
\vspace{-3mm}
\begin{minipage}{.45\textwidth}
\begin{lstlisting}[
    language=C++,
    basicstyle=\fontfamily{pcr}\selectfont\small,
]
if(0 <= x && x <= a) {
    return 1;
}
return 0;
\end{lstlisting}
\end{minipage}
\begin{minipage}{.45\textwidth}
\[
    \pr[x\sim\dist]{\prog(x) = 1} \geq 0.5
\]
\end{minipage}
\vspace{-3mm}
\end{center}
Let $\prog_c$ denote the interval program where $a$
is replaced by a constant $c \in [0, 1]$.
Observe that $\sem{\prog_c}(\dist)$
describes a joint distribution over $(x, v_r)$ pairs,
where $[0,c]\times\{1\}$ is assigned probability measure $c$
and $(c,1]\times\{0\}$ is assigned probability measure $1-c$.
Therefore, $\sem{\prog_c}(\dist) \models \post$
if and only if $c \in [0.5, 1]$.
\label{ex:prelim-ex}
\end{example}

\paragraph{Synthesis Problem}
\alg outputs a program that is approximately ``similar'' to a given functional specification
and that meets a postcondition.
This functional specification is some input-output relation
which we quantitatively want to match as closely as possible:
specifically, we want to minimize the \emph{error} of the output program $\prog$
from the functional specification $\initprog$,
defined as $\error(\prog) \coloneqq \pr[x\sim\dist]{\prog(x)\neq\initprog(x)}$.
(Note that we represent the functional specification as a program.)
The postcondition is Boolean, and therefore we always want it to be true.
\alg is guaranteed to converge whenever the space of solutions satisfying the
postcondition is \emph{robust} under small perturbations.
The following definition captures this notion of robustness:

\begin{definition}[$\alpha$-Robust Programs]
Fix an input distribution $\dist$, a postcondition \post,
and a set of programs $\progs$.
For any $\prog \in \progs$ and any $\alpha > 0$,
denote the \emph{open $\alpha$-ball centered at \prog} as
$B_\alpha(\prog) = \{\prog' \in \progs \mid \pr[x \sim \dist]{P(x) \neq P'(x)} < \alpha\}$.
We say a program $\prog$ is \emph{$\alpha$-robust}
if $\forall \prog' \in B_\alpha(\prog) \ldotp \sem{\prog'}(\dist) \models \post$.
\end{definition}

We can now state the synthesis problem solved by \alg:
\begin{definition}[Synthesis Problem]
Given an input distribution $\dist$,
a set of programs \progs,
a postcondition \post,
a functional specification $\initprog \in \progs$, and
parameters $\alpha>0$ and $0< \varepsilon \leq \alpha$,
the  synthesis problem is to find a
program $\prog \in \progs$ such that
$\sem{\prog}(\dist) \models \post$ and
such that any other $\alpha$-robust $\prog'$
has $\error(\prog) \leq \error(\prog') + \varepsilon$.
\end{definition}

\subsection{A Naive DIGITS Algorithm}

Algorithm~\ref{alg:digits}
shows a simplified, naive version of \alg,
which employs a \emph{synthesize-then-verify} approach.
The idea of \alg is to utilize non-probabilistic
synthesis techniques to synthesize a set of programs,
and then apply a probabilistic verification step to check
if any of the synthesized programs is a solution.

\begin{wrapfigure}{r}{0.51\textwidth}
\begin{minipage}[t]{\linewidth}
\vspace{-7mm}
\begin{algorithm}[H]
\footnotesize
\SetAlgoLined\DontPrintSemicolon
\SetKwProg{procedure}{Procedure}{}{}
\procedure{$\alg(\initprog,\dist,\post,m)$}{
    $S \gets \{x \sim \dist \mid i \in [1,\ldots,m]\}$\;
    $\mathit{progs} \gets \emptyset$\;
    \ForEach{$f : S \rightarrow \{0,1\}$ \label{line:loop}} {
        $\prog \gets \osynth(\{(x,f(x)) \mid x \in S\})$\;
        \If{$\prog \neq  \bot$} {
            $\mathit{progs} \gets  \mathit{progs} \cup \{\prog\}$\;
        }
    }
    $\mathit{res} \gets \{\prog \in \mathit{progs} \mid \overif(\prog,\dist,\post)\}$\;
    \Return{$\textnormal{argmin}_{\prog \in \mathit{res}} \{\oerror(\prog)\}$}
}
\caption{Naive \alg \label{alg:digits}}
\end{algorithm}
\vspace{-5mm}
\end{minipage}
\end{wrapfigure}

Specifically, this ``Naive \alg'' begins by
sampling an appropriate number of inputs from the input distribution
and stores them in the set $S$.
Second, it iteratively explores each possible  
function $f$ that maps the input samples to a
Boolean and
invokes a synthesis oracle
to synthesize a program $\prog$ that implements $f$, i.e.\ that satisfies
the set of input--output examples in which each
input $x \in S$ is mapped to the output $f(x)$.
Naive \alg then finds which of the synthesized programs satisfy the postcondition (the set $\emph{res}$);
we assume that we have access to a probabilistic verifier $\overif$ to perform these computations.
Finally, the algorithm outputs the program in the set $\emph{res}$ that has the lowest error with respect to the functional specification,
once again assuming access to another oracle $\oerror$ that can measure the error.

Note that the number of such functions $f : S \rightarrow \{0,1\}$
is exponential in the size of $|S|$.
As a ``heuristic'' to improve performance,
the actual \alg algorithm as presented in~\cite{digits}
employs an incremental trie-based search,
which we describe (alongside our new algorithm, \newalg)
and analyze in Section~\ref{sec:theory}.
The naive version described here is, however, sufficient to discuss
the convergence properties of the full algorithm.

\subsection{Convergence Guarantees}\label{sec:convergence}
\alg is only guaranteed to converge when the program model $\progs$
has \emph{finite VC dimension}.%
\footnote{%
Recall that this is largely a ``free'' assumption
since, again, sketches in our loop-free grammar
guaranteedly have finite VC dimension.
}
Intuitively, the VC dimension captures the expressiveness
of the set of ($\{0,1\}$-valued) programs $\progs$.
Given a set of inputs $S$, we say that $\progs$ \emph{shatters} $S$ iff,
for every partition of $S$ into sets $S_0 \sqcup S_1$,
there exists a program $\prog \in \progs$ such that
\rone for every $x\in S_0$, $\prog(x)=0$, and
\rtwo for every $x\in S_1$, $\prog(x)=1$.

\begin{definition}[VC Dimension]
The \emph{VC dimension} of a set of programs $\progs$
is the largest integer $d$ such that there exists a set
of inputs $S$ with cardinality $d$ that is shattered by $\progs$.
\end{definition}

We define the function
$\textsc{VCcost}(\varepsilon, \delta, d)=\frac{1}{\varepsilon}
(4\log_2(\frac{2}{\delta})+8d\log_2(\frac{13}{\varepsilon}))$~\cite{blumer1989learnability},
which is used in the following theorem:
\begin{theorem}[Convergence]\label{thm:repairpac}
Assume that
there exist an $\alpha>0$ and program ${\prog^*}$
that is $\alpha$-robust w.r.t. $\dist$ and $\post$.
Let $d$ be the VC dimension of the set of programs $\progs$.
For all
 bounds $0<\varepsilon\leq\alpha$ and $\delta>0$,
for every function $\osynth$,
and for any $m \geq \textsc{VCcost}(\varepsilon, \delta, k)$,
with probability
$\geq 1-\delta$ we have that
\alg enumerates a program $\prog$ with
$\pr[x\sim\dist]{\prog^*(x) \neq \prog(x)} \leq \varepsilon$
and
$\sem{\prog}(\dist) \models \post$.
\end{theorem}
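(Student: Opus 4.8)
The plan is to exploit the fact that Naive \alg enumerates \emph{every} labeling $f : S \to \{0,1\}$ of the sample, and in particular the labeling induced by the robust witness $\prog^*$. I would then show that the program synthesized for this single labeling already satisfies both conclusions: the error bound comes from a standard realizable-PAC (uniform convergence) argument, while the postcondition comes essentially for free from $\alpha$-robustness.

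First I would fix the sample $S = \{x_1,\dots,x_m\}$ drawn i.i.d.\ from $\dist$ and single out the labeling $f^*(x) \coloneqq \prog^*(x)$. Since \alg iterates over all labelings, it issues the call $\osynth(\{(x, f^*(x)) \mid x \in S\})$. These constraints are satisfiable, witnessed by $\prog^*$ itself, so for \emph{any} oracle $\osynth$ the call returns some $\prog \neq \bot$ agreeing with $\prog^*$ on all of $S$; that is, the empirical disagreement $\frac{1}{m}|\{x \in S : \prog(x) \neq \prog^*(x)\}|$ is zero. Crucially, I only use that the returned program is \emph{consistent} on $S$, never its identity, which is exactly what lets the argument hold for every $\osynth$.

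The heart of the proof is turning zero empirical disagreement into small true disagreement $\pr[x\sim\dist]{\prog(x) \neq \prog^*(x)}$. I would instantiate the realizable-PAC theorem of Blumer et al.~\cite{blumer1989learnability} (the source of the \textsc{VCcost} bound) with the concept class of \emph{disagreement regions} $\{\,x : \prog(x) \neq \prog^*(x)\,\}$ as $\prog$ ranges over $\progs$. A short lemma shows this class has the same VC dimension $d$ as $\progs$: XOR-ing with the fixed function $\prog^*$ is a bijection on labelings of any finite set, hence preserves shatterability. With $m \geq \textsc{VCcost}(\varepsilon,\delta,d)$, uniform convergence then guarantees that, with probability $\geq 1-\delta$ over the draw of $S$, \emph{every} $\prog \in \progs$ consistent with $\prog^*$ on $S$ has true disagreement $\leq \varepsilon$. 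The main obstacle is precisely that I need this \emph{uniform} ``no consistent hypothesis is bad'' form rather than a bound on a single fixed $\prog$, since the synthesizer may return any consistent program; fortunately this is exactly what the growth-function/double-sampling argument behind \textsc{VCcost} delivers.

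Finally I would combine the two ingredients. On the high-probability event just described, the synthesized $\prog$ satisfies $\pr[x\sim\dist]{\prog^*(x) \neq \prog(x)} \leq \varepsilon \leq \alpha$, so $\prog$ lies in the ball $B_\alpha(\prog^*)$; since $\prog^*$ is $\alpha$-robust, this immediately yields $\sem{\prog}(\dist) \models \post$. Thus \alg enumerates a program meeting both conclusions with probability $\geq 1-\delta$. The only delicate point is the boundary case $\varepsilon = \alpha$, where $\leq \alpha$ does not literally place $\prog$ in the \emph{open} ball $B_\alpha(\prog^*)$; I would dispatch this either by observing that the \textsc{VCcost} bound can be taken strict or by invoking robustness at a radius marginally below $\alpha$, neither of which affects the sample complexity.
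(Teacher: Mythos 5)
Your proposal is correct and takes essentially the same route as the paper: the uniform ``every program consistent with $\prog^*$ on $S$ has true disagreement $\leq \varepsilon$'' guarantee you extract from Blumer et al.\ (via the symmetric-difference class of the same VC dimension) is precisely the statement that $S$ is an $\varepsilon$-net for $\prog^*$, which the paper then combines with $\alpha$-robustness exactly as you do to transfer the postcondition. Your explicit treatment of the boundary case $\varepsilon = \alpha$, where distance $\leq \alpha$ does not land in the \emph{open} ball $B_\alpha(\prog^*)$, is a minor extra precision that the paper's own recap quietly sidesteps by assuming $\varepsilon < \alpha$.
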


To reiterate, suppose $\prog^*$ is a correct program with small error $\error(\prog^*) = k$;
the convergence result follows two main points:
\rone $\prog^*$ must be \emph{$\alpha$-robust},
meaning every $\prog$ with $\pr[x\sim\dist]{\prog(x) \neq \prog^*(x)} < \alpha$
must also be correct, and therefore
\rtwo by synthesizing \emph{any} $\prog$ such that
$\pr[x\sim\dist]{\prog(x) \neq \prog^*(x)} \leq \varepsilon$
where $\varepsilon < \alpha$,
then $\prog$ is a correct program with error $\error(\prog)$ within $k \pm \varepsilon$.

\subsection{Understanding Convergence}
The importance of finite VC dimension is due to the fact
that the convergence statement borrows directly from
\emph{probably approximately correct (PAC) learning}.
We will briefly discuss a core detail of efficient PAC learning
that is relevant to understanding the convergence of \alg
(and, in turn, our analysis of \newalg in Section~\ref{sec:newopt}),
and refer the interested reader to Kearns and Vazirani's book~\cite{kearns94introduction}
for a complete overview.
Specifically, we consider the notion of an \emph{$\varepsilon$-net},
which establishes the approximate-definability of a target program
in terms of points in its input space.

\begin{definition}[$\varepsilon$-net]
Suppose $\prog \in \progs$ is a target program,
and points in its input domain $\mathcal{X}$ are distributed $x \sim \dist$.
For a fixed $\varepsilon \in [0, 1]$,
we say a set of points $S \subset \mathcal{X}$ is an \emph{$\varepsilon$-net}
for $\prog$ (with respect to $\progs$ and $\dist$) if
for every $\prog' \in \progs$ with $\pr[x \sim \dist]{\prog(x) \neq \prog'(x)} > \varepsilon$
there exists a witness $x \in S$ such that
$\prog(x) \neq \prog'(x)$.
\end{definition}
In other words, if $S$ is an $\varepsilon$-net for $\prog$,
and if $\prog'$ ``agrees'' with $\prog$ on all of $S$,
then $\prog$ and $\prog'$ can only differ by at most $\varepsilon$ probability mass.

Observe the relevance of $\varepsilon$-nets to the convergence of \alg:
the synthesis oracle is guaranteed not to ``fail''
by producing only programs $\varepsilon$-far from some $\varepsilon$-robust $\prog^*$
if the sample set happens to be an $\varepsilon$-net for $\prog^*$.
In fact, this observation is exactly the core of the PAC learning argument:
having an $\varepsilon$-net exactly guarantees the approximate learnability.

A remarkable result of computational learning theory is that
whenever $\progs$ has finite VC dimension,
the probability that $m$ random samples fail to yield an $\varepsilon$-net
becomes diminishingly small as $m$ increases.
Indeed, the given \textsc{VCcost} function used in Theorem~\ref{thm:repairpac}
is a dual form of this latter result---%
that polynomially many samples are sufficient
to form an $\varepsilon$-net with high probability.

\section{The Efficiency of Trie-Based Search}\label{sec:theory}
After providing details on the search strategy employed by
\alg, we present our theoretical result on the polynomial bound on the number of synthesis
queries that \alg requires.

\subsection{The Trie-Based Search Strategy of DIGITS}
Naive \alg, as presented in Algorithm~\ref{alg:digits}, performs a very unstructured, exponential search
over the output labelings of the sampled inputs---i.e., the possible Boolean functions $f$ in Algorithm~\ref{alg:digits}.
In our original paper~\cite{digits} we present a ``heuristic'' implementation strategy
that incrementally explores the set of possible output labelings
using a  trie data structure.
In this section, we study the complexity of this technique through the lens  of computational learning theory
and discover the surprising result that \alg requires
a polynomial number of calls to the synthesizer in the
size of the sample set!
Our improved search algorithm (Section~\ref{sec:newopt}) inherits these results.

For the remainder of this paper,
we use \alg to refer to this incremental version.
A full description is necessary for our analysis:
Figure~\ref{fig:newalg} (non-framed rules only) consists of a collection of guarded rules
describing the construction of the trie used by \alg
to incrementally explore the set of possible output labelings.
Our improved version, \newalg (presented in Section~\ref{sec:newopt}),
corresponds to the addition of the framed parts,
but without them, the rules describe \alg.

\begin{figure}[t]
\centering
\newcommand{\highlight}[1]{\framebox{$#1$}}
\begin{prooftree}
    \justifies
    \mathit{explored} \gets \{\epsilon\}
    \quad \prog_\epsilon \gets \initprog
    \quad \mathit{depth} \gets 0
    \quad \mathit{best} \gets \bot
    \using \text{Initialize}
\end{prooftree}
\vspace{1em}

\begin{prooftree}
    \begin{array}{c}
    \forall \sigma \in \mathit{explored} \ldotp \forall b \in \{0,1\} \ldotp \\
    (\prog_\sigma \neq \bot \land |\sigma b| \leq \mathit{depth}
    \highlight{\land \mathit{unblocked}(\sigma b)})
    \Rightarrow \sigma b \in \mathit{explored}
    \end{array}
    \justifies
    \mathit{sample}_{\mathit{depth}+1} \sim \dist
    \quad \mathit{depth} \gets \mathit{depth} + 1
    \using \text{Deepen}
\end{prooftree}
\vspace{1em}

\begin{prooftree}
    \begin{array}{c}
    \sigma \in \mathit{explored}
    \quad \prog_\sigma \neq \bot
    \quad b \in \{0,1\} \\
    \sigma b \not \in \mathit{explored}
    \quad |\sigma b| \leq \mathit{depth}
    \quad \highlight{\mathit{unblocked}(\sigma b)}
    \end{array}
    \justifies
    \begin{array}{c}
    \prog_{\sigma b} \gets
        \osynth(\{(\mathit{sample}_{i+1},\sigma b (i)) : 0 \leq i < |\sigma b |\}) \\
    \quad \mathit{explored} \gets \mathit{explored} \cup \{\sigma b\}
    \end{array}
    \using \text{Explore (Synthesis Query)}
\end{prooftree}
\vspace{1em}

\begin{prooftree}
    \begin{array}{c}
    \sigma \in \mathit{explored}
    \quad \prog_\sigma \neq \bot
    \quad b \in \{0,1\}
    \quad \sigma b \not \in \mathit{explored} \\
    |\sigma b| \leq \mathit{depth}
    \quad \highlight{\mathit{unblocked}(\sigma b)}
    \quad \prog_\sigma(\mathit{sample}_{|\sigma b|}) = b
    \end{array}
    \justifies
    \prog_{\sigma b} \gets \prog_\sigma
    \quad \mathit{explored} \gets \mathit{explored} \cup \{\sigma b\}
    \using \text{Explore (Solution Propagation)}
\end{prooftree}
\vspace{1em}

\begin{prooftree}
    \sigma^* = \textnormal{argmin}_\sigma \{\oerror(\prog_\sigma) \mid
        \sigma \in \mathit{explored} \land \prog_\sigma \neq \bot
        \land \overif(\prog_\sigma) = \textnormal{true} \}
    \justifies
    \mathit{best} \gets \prog_{\sigma^*}
    \using \text{Best}
\end{prooftree}
\vspace{1em}

\highlight{
$$\text{where } \mathit{unblocked}(\sigma) \coloneqq
|\{i : 0 \leq i < |\sigma| \land \sigma(i) \neq \initprog(\mathit{sample}_{i+1})\}|
\leq \thresh \cdot \mathit{depth}$$
}

\caption{Full \alg description and our new extension, \newalg, shown in boxes.}
\label{fig:newalg}
\end{figure}

Nodes in the trie represent partial output labelings---i.e., functions
$f$ assigning Boolean values to only some of the samples in $S=\{x_1,\ldots,x_m\}$.
Each node is identified by a binary string $\sigma=b_1\cdots b_k$  ($k$ can be smaller than $m$)
denoting the path to the node from the root.
The string $\sigma$ also describes the partial output-labeling function $f$ corresponding to
the node---i.e., if the $i$-th bit $b_i$ is set to 1, then $f(x_i)=true$.
The set $\mathit{explored}$ represents the nodes in the trie built thus far;
for each new node, the algorithm synthesizes a program consistent with
the corresponding partial output function (``Explore'' rules).
The variable $\mathit{depth}$ controls the incremental aspect of the search
and represents the maximum length of any $\sigma$ in $\mathit{explored}$;
it is incremented whenever all nodes up to that depth
have been explored (the ``Deepen'' rule).
The crucial part of the algorithm is that, if no program can be synthesized for the
partial output function of a node identified by $\sigma$, the
algorithm does not need to issue further synthesis queries for the descendants of $\sigma$.

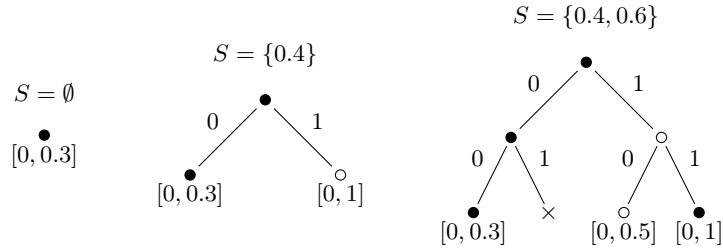
\begin{figure}[t]
\centering
\begin{minipage}{.15\textwidth}
\begin{tikzpicture}[x=1cm, y=1cm]
    \node (root) at (0,0) {};
    \draw[fill] (root) circle (.5ex);
    \node at (root) [above=1em] {$S = \emptyset$};
    \node[below] at (root) {$[0,0.3]$};
\end{tikzpicture}
\end{minipage}
\begin{minipage}{.3\textwidth}
\begin{tikzpicture}[x=1cm, y=1cm]
    \node (root) at (0,0) {};
    \draw[fill] (root) circle (.5ex);
    \node at (root) [above=1em] {$S = \{0.4\}$};

    \node (c0) at (-1, -1) {};
    \node (c1) at (1, -1) {};
    \draw[fill] (c0) circle (.5ex);
    \draw (c1) circle (.5ex);
    \node[below] at (c0) {$[0,0.3]$};
    \node[below] at (c1) {$[0,1]$};

    \draw (root) -- node[above left] {0} ++(c0);
    \draw (root) -- node[above right] {1} ++(c1);
\end{tikzpicture}
\end{minipage}
\begin{minipage}{.3\textwidth}
\begin{tikzpicture}[x=1cm, y=1cm]
    \node (root) at (0,0) {};
    \draw[fill] (root) circle (.5ex);
    \node at (root) [above=1em] {$S = \{0.4, 0.6\}$};

    \node (c0) at (-1, -1) {};
    \node (c1) at (1, -1) {};
    \draw[fill] (c0) circle (.5ex);
    \draw (c1) circle (.5ex);

    \draw (root) -- node[above left] {0} ++(c0);
    \draw (root) -- node[above right] {1} ++(c1);

    \node (c00) at (-1.5,-2) {};
    \node (c01) at (-0.5,-2) {};
    \node (c10) at (0.5,-2) {};
    \node (c11) at (1.5,-2) {};
    \draw[fill] (c00) circle (.5ex);
    \node at (c01) {$\times$};
    \draw (c10) circle (.5ex);
    \draw[fill] (c11) circle (.5ex);
    \node[below] at (c00) {$[0,0.3]$};
    \node[below] at (c10) {$[0,0.5]$};
    \node[below] at (c11) {$[0,1]$};
    \draw (c0) -- node[above left] {0} ++(c00);
    \draw (c0) -- node[above right] {1} ++(c01);
    \draw (c1) -- node[above left] {0} ++(c10);
    \draw (c1) -- node[above right] {1} ++(c11);
\end{tikzpicture}
\end{minipage}
\caption{Example execution of incremental \alg
on interval programs, starting from $[0,0.3]$.
Hollow circles denote calls to $\osynth$ that yield new programs;
the cross denotes a call to $\osynth$ that returns $\bot$.}
\label{fig:ex}
\end{figure}

Figure~\ref{fig:ex} shows how \alg builds a trie
for an example run on the interval programs from Example~\ref{ex:prelim-ex},
where we suppose we begin with an incorrect program
describing the interval $[0,0.3]$.
Initially, we set the root program to $[0,0.3]$ (left figure).
The ``Deepen'' rule applies,
so a sample is added to the set of samples---suppose it's $0.4$.
``Explore'' rules are then applied twice to build the children of the root:
the child following the 0 branch needs to map $0.4 \mapsto 0$,
which $[0,0.3]$ already does, thus it is propagated to that child
without asking $\osynth$ to perform a synthesis query.
For the child following 1, we instead make a synthesis query,
using the oracle $\osynth$, for any value of $a$ such that $[0,a]$
maps $0.4 \mapsto 1$---suppose it returns the solution $a=1$,
and we associate $[0,1]$ with this node.
At this point we have exhausted depth 1 (middle figure),
so ``Deepen'' once again applies, perhaps adding $0.6$ to the sample set.
At this depth (right figure), only two calls to $\osynth$ are made:
in the case of the call at $\sigma=01$,
there is no value of $a$ that causes both $0.4 \mapsto 0$ and $0.6 \mapsto 1$,
so $\osynth$ returns $\bot$, and we do not try to explore any children of this node
in the future.
The algorithm continues in this manner until a stopping condition is reached---e.g.,
enough samples are enumerated.

\subsection{Polynomial Bound on the Number of Synthesis Queries}

We observed in~\cite{digits} that the trie-based exploration seems to be efficient in practice,
despite potential exponential growth of the number of explored nodes in the trie as the depth of the search increases.
The convergence analysis of \alg relies on the finite VC dimension of the program model,
but VC dimension itself is just a summary of the \emph{growth function},
a function that describes a notion of complexity of the set of programs in question.
We will see that the growth function much more precisely describes the behavior of the
trie-based search;
we will then use a classic result from computational learning theory
to derive better bounds on the performance of the search.
We define the growth function below,
adapting the presentation from~\cite{kearns94introduction}.
\begin{definition}[Realizable Dichotomies]
We are given a set $\progs$ of programs representing functions from $\mathcal{X} \rightarrow \{0,1\}$
and a (finite) set of inputs $S \subset \mathcal{X}$.
We call any $f : S \rightarrow \{0,1\}$ a \emph{dichotomy} of $S$;
if there exists a program $\prog \in \progs$ that extends $f$ to its full domain $\mathcal{X}$,
we call $f$ a \emph{realizable dichotomy} in $\progs$.
We denote the set of realizable dichotomies as
\[
    \dich{\progs}{S} \coloneqq
    \{f : S \rightarrow \{0,1\} \mid
    \exists \prog \in \progs \ldotp \forall x \in S \ldotp
    \prog(x) = f(x)\}.
\]
\end{definition}
Observe that for any (infinite) set $\progs$ and any finite set $S$
that $1 \leq \lvert\dich{\progs}{S}\rvert \leq 2^{|S|}$.
We define the growth function in terms of the realizable dichotomies:
\begin{definition}[Growth Function]
The \emph{growth function} is the maximal number of realizable dichotomies
as a function of the number of samples, denoted
\[
    \growth{\progs}{m} \coloneqq
    \max_{\substack{S \subset \mathcal{X} :\\ |S| = m}} \{ \lvert\dich{\progs}{S}\rvert \}.
\]
\end{definition}
Observe that $\progs$ has VC dimension $d$ if and only if
$d$ is the largest integer satisfying $\growth{\progs}{d} = 2^d$
(and infinite VC dimension when $\growth{\progs}{m}$ is identically $2^m$)---%
in fact, VC dimension is often defined using this characterization.

\begin{example}
Consider the set of intervals of the form $[0,a]$
as in Example~\ref{ex:prelim-ex} and Figure~\ref{fig:ex}.
For the set of two points $S = \{0.4, 0.6\}$, we have that $\lvert\dich{[0,a]}{S}\rvert = 3$,
since, by example:
$a = 0.5$ accepts $0.4$ but not $0.6$,
$a = 0.3$ accepts neither, and $a = 1$ accepts both,
thus these three dichotomies are realizable;
however, no interval with $0$ as a left endpoint can accept $0.6$ and not $0.4$,
thus this dichotomy is not realizable.
In fact, for any (finite) set $S \subset [0, 1]$,
we have that $\lvert\dich{[0,a]}{S}\rvert = |S| + 1$;
we then have that $\growth{[0,a]}{m} = m + 1$.
\end{example}

When \alg terminates having used a sample set $S$,
it has considered all the dichotomies of $S$:
the programs it has enumerated exactly correspond to extensions
of the realizable dichotomies $\dich{\progs}{S}$.
The trie-based exploration
is effectively trying to minimize the number of $\osynth$ queries performed on non-realizable ones,
but doing so without explicit knowledge of the full functional behavior of programs in $\progs$.
In fact, it manages to stay relatively close to performing queries
only on the realizable dichotomies:
\begin{lemma}\label{lem:algdich}
\alg performs at most $|S| \lvert\dich{\progs}{S}\rvert$ synthesis oracle queries.
More precisely, let $S = \{x_1,\ldots,x_m\}$ be indexed
by the depth at which each sample was added:
the exact number of synthesis queries is
$\sum_{\ell = 1}^m \lvert\dich{\progs}{\{x_1,\ldots,x_{\ell-1}\}}\rvert$.
\end{lemma}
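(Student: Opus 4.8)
The plan is to count the synthesis queries exactly by organizing the trie according to its node lengths and relating the non-trivial queries to realizable dichotomies. First I would establish the governing invariant: for any explored node $\sigma$, we have $\prog_\sigma \neq \bot$ if and only if $\sigma$, read as a dichotomy of $\{x_1,\ldots,x_{|\sigma|}\}$, lies in $\dich{\progs}{\{x_1,\ldots,x_{|\sigma|}\}}$. The forward direction is immediate, since a non-$\bot$ program $\prog_\sigma$ is precisely a witness extending the dichotomy $\sigma$. For the converse I would use that every prefix of a realizable dichotomy is itself realizable (the same witnessing program works on the shorter prefix), so by induction along the path from the root $\epsilon$ every prefix of $\sigma$ is explored with a non-$\bot$ program, forcing $\sigma$ to be explored once $\mathit{depth} \geq |\sigma|$. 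Thus the explored non-$\bot$ nodes of length $\ell$ are in bijection with $\dich{\progs}{\{x_1,\ldots,x_\ell\}}$.

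Next comes the crucial structural step: each realizable node $\sigma$ of length $\ell-1 < m$ is responsible for exactly one synthesis query, namely the one creating its child $\sigma b$ with $b \neq \prog_\sigma(x_\ell)$, where $x_\ell = x_{|\sigma|+1}$ is the sample drawn at depth $\ell$. Indeed, the complementary child, whose new bit agrees with $\prog_\sigma(x_\ell)$, is produced by the ``Explore (Solution Propagation)'' rule, which inherits $\prog_\sigma$ and issues no query, while only the disagreeing child can trigger ``Explore (Synthesis Query)''. Conversely, every synthesis query arises in this way, since that rule fires only from an explored parent with $\prog_\sigma \neq \bot$.

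Then I would bin the queries by the length $\ell$ of the node they create. By the previous step the length-$\ell$ queries are in bijection with their realizable parents of length $\ell-1$, i.e., with $\dich{\progs}{\{x_1,\ldots,x_{\ell-1}\}}$, so there are exactly $\lvert\dich{\progs}{\{x_1,\ldots,x_{\ell-1}\}}\rvert$ of them; the $\ell=1$ term accounts for the single query beneath the root, matching $\lvert\dich{\progs}{\emptyset}\rvert = 1$. Summing over $\ell$ from $1$ to the final depth $m$ yields the exact count $\sum_{\ell=1}^m \lvert\dich{\progs}{\{x_1,\ldots,x_{\ell-1}\}}\rvert$. For the stated upper bound I would invoke monotonicity of the dichotomy count under subsets: for $S' \subseteq S$, restriction of a witnessing program shows the restriction map sends $\dich{\progs}{S}$ onto $\dich{\progs}{S'}$, whence $\lvert\dich{\progs}{S'}\rvert \leq \lvert\dich{\progs}{S}\rvert$; applying this to each of the $m = |S|$ summands bounds the total by $|S|\,\lvert\dich{\progs}{S}\rvert$.

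The hard part will be making the ``exactly one query per realizable node'' step airtight, as it hinges on two facts about the nondeterministic rule system: that Solution Propagation is applied whenever its guard holds (so no query is wasted on an already-consistent extension), and that both children of every realizable node at depth below $m$ are eventually explored. The former is a convention about rule scheduling that I would state explicitly, and the latter I would justify from the guard of the Deepen rule, which forces all sufficiently shallow nodes to be explored before $\mathit{depth}$ increases.
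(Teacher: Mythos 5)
Your proposal is correct and takes essentially the same route as the paper's proof: both count queries depth by depth, identify the nodes with $\prog_\sigma \neq \bot$ at length $\ell-1$ with the realizable dichotomies $\dich{\progs}{\{x_1,\ldots,x_{\ell-1}\}}$, observe that each such node spawns exactly one propagated child and one queried child, and conclude with monotonicity of $\lvert\dich{\progs}{\cdot}\rvert$ under adding samples to obtain the $|S|\,\lvert\dich{\progs}{S}\rvert$ bound. Your explicit prefix-closure invariant and the stated convention that Solution Propagation takes precedence over a synthesis query simply make airtight what the paper's proof leaves implicit.
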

\begin{proof}
Let $T_d$ denote the total number of queries performed once depth $d$ is completed.
We perform no queries for the root,%
\footnote{We assume the functional specification itself
is some $\initprog \in \progs$ and thus can be used---%
the alternative is a trivial synthesis query on an empty set of constraints.}
thus $T_0 = 0$.
Upon completing depth $d-1$, the realizable dichotomies of
$\{x_1, \ldots, x_{d-1}\}$ exactly specify the nodes
whose children will be explored at depth $d$.
For each such node, one child is skipped due to solution propagation,
while an oracle query is performed on the other,
thus $T_d = T_{d-1} + \lvert\dich{\progs}{\{x_1,\ldots,x_{d-1}\}}\rvert$.
Lastly, $|\dich{\progs}{S}|$ cannot decrease by adding elements to $S$, so we have that
$T_m = \sum_{\ell=1}^m \lvert\dich{\progs}{\{x_1,\ldots,x_{\ell-1}\}}\rvert
\leq \sum_{\ell=1}^m \lvert\dich{\progs}{S}\rvert
\leq |S| \lvert\dich{\progs}{S}\rvert$.
\qed
\end{proof}


Connecting \alg to the realizable dichotomies and, in turn, the growth function
allows us to employ a remarkable result from computational learning theory,
stating that the growth function for any set exhibits one of two asymptotic behaviors:
it is either \emph{identically} $2^m$ (infinite VC dimension)
or dominated by a polynomial!
This is commonly called the Sauer-Shelah Lemma~\cite{sauer72,shelah72}:
\begin{lemma}[Sauer-Shelah]\label{lem:sauer}
If $\progs$ has finite VC dimension $d$,
then for all $m \geq d$, $\growth{\progs}{m} \leq \left(\frac{em}{d}\right)^d$;
i.e.\ $\growth{\progs}{m} = O(m^d)$.
\end{lemma}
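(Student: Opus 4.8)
The plan is to prove the Sauer–Shelah Lemma by a combinatorial counting argument, bounding $\growth{\progs}{m}$ directly in terms of the VC dimension $d$. The cleanest route is to prove a stronger intermediate claim: for any finite set $S$ with $|S|=m$, the number of realizable dichotomies $\lvert\dich{\progs}{S}\rvert$ is at most the number of subsets of $S$ of size at most $d$, i.e. $\sum_{i=0}^{d}\binom{m}{i}$. Once this is established, the stated bound follows by the standard estimate $\sum_{i=0}^{d}\binom{m}{i}\leq\left(\frac{em}{d}\right)^d$ valid for $m\geq d$, which is a routine calculation I would not grind through.

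\medskip

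\noindent\textbf{Key steps.} First I would recast the problem purely in terms of dichotomies, forgetting the programs: let $\mathcal{F}$ be the family of realizable dichotomies on $S$, viewed as a set system on the $m$-element ground set $S$. The crucial property, inherited from the definition of VC dimension, is that $\mathcal{F}$ cannot \emph{shatter} any subset $A\subseteq S$ with $|A|>d$ --- otherwise $\progs$ would shatter a set larger than $d$. So it suffices to prove the combinatorial statement: any family $\mathcal{F}$ on an $m$-element set that shatters no set of size $>d$ satisfies $|\mathcal{F}|\leq\sum_{i=0}^{d}\binom{m}{i}$. Second, I would prove this by induction on $m$ (with $d$ fixed). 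Picking a distinguished element $x\in S$, I would partition and reorganize $\mathcal{F}$ into two derived families on $S\setminus\{x\}$: the family $\mathcal{F}_1$ of dichotomies restricted to $S\setminus\{x\}$ (collapsing pairs that agree off $x$), and the family $\mathcal{F}_2$ of those restrictions that appear with \emph{both} labels on $x$. The key accounting identity is $|\mathcal{F}|=|\mathcal{F}_1|+|\mathcal{F}_2|$. Third, I would verify the two inductive hypotheses apply: $\mathcal{F}_1$ shatters no set of size $>d$ (immediate, since a shattered set for $\mathcal{F}_1$ is shattered by $\mathcal{F}$), and $\mathcal{F}_2$ shatters no set of size $>d-1$ (because if $\mathcal{F}_2$ shattered a set $B$, then $\mathcal{F}$ would shatter $B\cup\{x\}$, using that every dichotomy in $\mathcal{F}_2$ lifts to both values on $x$). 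Applying induction gives $|\mathcal{F}_1|\leq\sum_{i=0}^{d}\binom{m-1}{i}$ and $|\mathcal{F}_2|\leq\sum_{i=0}^{d-1}\binom{m-1}{i}$, and Pascal's identity then collapses the sum to $\sum_{i=0}^{d}\binom{m}{i}$.

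\medskip

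\noindent\textbf{The main obstacle} is getting the definition of $\mathcal{F}_2$ exactly right so that the shattering-dimension drop from $d$ to $d-1$ is justified rigorously. The subtlety is that $\mathcal{F}_2$ must consist of restrictions $g:S\setminus\{x\}\to\{0,1\}$ for which \emph{both} extensions (sending $x\mapsto 0$ and $x\mapsto 1$) lie in $\mathcal{F}$; only for such $g$ can one argue that appending $x$ to any set shattered by $\mathcal{F}_2$ yields a set shattered by $\mathcal{F}$. Verifying the counting identity $|\mathcal{F}|=|\mathcal{F}_1|+|\mathcal{F}_2|$ requires a careful case analysis of how pairs of dichotomies in $\mathcal{F}$ project down: dichotomies agreeing off $x$ collapse to a single element of $\mathcal{F}_1$ and contribute one element to $\mathcal{F}_2$, while a dichotomy with no partner contributes one element to $\mathcal{F}_1$ and none to $\mathcal{F}_2$, and this bookkeeping is exactly what makes the identity hold. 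The base cases ($m=d$, where the bound is the trivial $2^d$, and $d=0$, where no nonempty set is shattered so $|\mathcal{F}|\leq 1$) are straightforward and I would dispatch them quickly.
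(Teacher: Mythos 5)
Your proof is correct, but note that the paper itself offers no proof of this lemma at all: it is quoted as a classical result with citations to Sauer and Shelah, and nothing in the body or appendix derives it. What you have reconstructed is the canonical textbook argument --- reduce to a set system $\mathcal{F} \subseteq \{0,1\}^S$ that shatters no set of size greater than $d$, prove $|\mathcal{F}| \leq \sum_{i=0}^{d}\binom{m}{i}$ by induction using the two derived families on $S \setminus \{x\}$ with the accounting identity $|\mathcal{F}| = |\mathcal{F}_1| + |\mathcal{F}_2|$ and Pascal's identity, then apply the standard estimate $\sum_{i=0}^{d}\binom{m}{i} \leq \left(\frac{em}{d}\right)^d$ for $m \geq d$ --- and this is essentially the proof found in the references the paper cites and in Kearns--Vazirani. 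Your handling of the one genuinely delicate point (that $\mathcal{F}_2$ must consist of restrictions admitting \emph{both} extensions in $\mathcal{F}$, which is what drops the shattering dimension to $d-1$) is exactly right.

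One small bookkeeping correction: you announce the induction as being ``on $m$ (with $d$ fixed),'' but your inductive step applies the hypothesis to $\mathcal{F}_2$ at parameter $d-1$, which a fixed-$d$ induction does not license. The fix is routine --- either induct on $m$ with the statement universally quantified over $d$ (base case $m=0$), or induct on $m+d$ --- and your stated base cases ($d=0$ giving $|\mathcal{F}| \leq 1$, and small $m$ giving the trivial $2^m$ bound) then slot in correctly. This is a wording slip rather than a gap; the argument as a whole is sound.
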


Combining our lemma with this famous one yields a surprising result---%
that for a fixed set of programs $\progs$ with finite VC dimension,
the number of oracle queries performed by \alg
\emph{is guaranteedly polynomial} in the depth of the search,
where the degree of the polynomial is determined by the VC dimension:
\begin{theorem}\label{thm:main}
If $\progs$ has VC dimension $d$,
then \alg performs $O(m^{d+1})$ synthesis-oracle queries.
\end{theorem}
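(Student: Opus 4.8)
The plan is to chain together the two lemmas already established. Lemma~\ref{lem:algdich} converts the question of how many synthesis-oracle queries \alg issues into a purely combinatorial one about realizable dichotomies, while the Sauer-Shelah Lemma (Lemma~\ref{lem:sauer}) bounds the number of realizable dichotomies on a finite sample set in terms of the VC dimension. Composing the two immediately yields the claimed polynomial bound, with the extra factor of $m$ coming from the outer $|S|$ in the statement of Lemma~\ref{lem:algdich}.

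Concretely, first I would invoke Lemma~\ref{lem:algdich} to bound the total number of queries by $|S|\,\lvert\dich{\progs}{S}\rvert$, where $S$ is the final sample set with $|S| = m$. Next, by the definition of the growth function, $\lvert\dich{\progs}{S}\rvert \leq \growth{\progs}{m}$, since $\growth{\progs}{m}$ is by definition the maximum of $\lvert\dich{\progs}{S'}\rvert$ over all $S'$ of size $m$. Finally, applying Sauer-Shelah gives $\growth{\progs}{m} = O(m^d)$ for a set of VC dimension $d$, so the number of queries is at most $m \cdot O(m^d) = O(m^{d+1})$, as desired.

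A slightly sharper route uses the exact count $\sum_{\ell=1}^m \lvert\dich{\progs}{\{x_1,\ldots,x_{\ell-1}\}}\rvert$ from Lemma~\ref{lem:algdich}: bounding each summand by $\growth{\progs}{\ell-1} = O((\ell-1)^d)$ and summing yields $\sum_{\ell=1}^m O(\ell^d) = O(m^{d+1})$, with a marginally better leading constant than the crude $|S|\cdot\growth{\progs}{m}$ bound. Either way the asymptotic statement is the same.

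Since the argument is essentially a substitution of one bound into the other, I do not anticipate a genuine obstacle; the only point requiring a moment's care is that Sauer-Shelah is stated only for $m \geq d$. This is harmless for the asymptotic claim: for the finitely many $m < d$, each term $\lvert\dich{\progs}{\{x_1,\ldots,x_{\ell-1}\}}\rvert$ is trivially at most $2^{\ell-1}$, so the total number of queries is at most $2^d$, a constant depending only on $d$, which is absorbed into the $O(\cdot)$ notation.
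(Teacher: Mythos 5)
Your proof is correct and follows essentially the same route as the paper's: invoke Lemma~\ref{lem:algdich} to bound the query count by $|S|\,\lvert\dich{\progs}{S}\rvert \leq m\,\growth{\progs}{m}$, then apply Lemma~\ref{lem:sauer} to obtain $O(m^{d+1})$. Your additional remarks---the sharper per-depth summation and the handling of the $m < d$ regime where Sauer-Shelah does not directly apply---are careful refinements the paper elides, but they do not change the argument.
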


In short, the reason an execution of \alg \emph{seems} to enumerate
a sub-exponential number of programs (as a function of the depth of the search)
is because it literally must be polynomial.
Furthermore, the algorithm performs oracle queries
on \emph{nearly} only those polynomially-many realizable dichotomies.

\begin{example}
A \alg run on the $[0,a]$ programs as in Figure~\ref{fig:ex}
using a sample set of size $m$
will perform $O(m^2)$ oracle queries,
since the VC dimension of these intervals is $1$.
(In fact, every run of the algorithm on these programs
will perform exactly $\frac{1}{2}m(m+1)$ many queries.)
\end{example}

\section{Property-Directed $\tau$-DIGITS}\label{sec:newopt}

\alg has better convergence guarantees when it operates on larger sets of sampled inputs.
In this section, we describe a new optimization of \alg that reduces the number of synthesis queries
performed by the algorithm so that it more quickly reaches higher depths in the trie, and
thus allows to scale to larger samples sets.
This optimized \alg, called \newalg, is shown in Figure~\ref{fig:newalg} as the set of all the rules of \alg plus the framed elements.
The high-level idea is to skip synthesis queries that are (quantifiably) unlikely to result
in optimal solutions.
For example, if the functional specification $\initprog$ maps every sampled input in $S$ to 0,
then the synthesis query on the mapping of every element of $S$ to 1
becomes increasingly likely to result in programs that have maximal distance from $\initprog$
as the size of $S$ increases;
hence the algorithm could probably avoid performing that query.
In the following, we make use of the concept of \emph{Hamming distance}
between pairs of programs:
\begin{definition}[Hamming Distance]
For any finite set of inputs $S$ and any two programs $\prog_1, \prog_2$,
we denote
$ \hamm_S(\prog_1, \prog_2) \coloneqq |\{x \in S \mid \prog_1(x) \neq \prog_2(x)\}| $
(we will also allow any $\{0,1\}$-valued string
to be an argument of $\hamm_S$).
\end{definition}

\subsection{Algorithm Description}

Fix the given functional specification $\initprog$
and suppose that there exists an $\varepsilon$-robust solution $\prog^*$
with (nearly) minimal error $k = \error(\prog^*) \coloneqq \pr[x\sim\dist]{\initprog(x) \neq \prog^*(x)}$;
we would be happy to find \emph{any} program $\prog$ in $\prog^*$'s $\varepsilon$-ball.
Suppose we angelically know $k$ a priori,
and we thus restrict our search (for each depth $m$) only to constraint strings
(i.e.\ $\sigma$ in Figure~\ref{fig:newalg})
that have Hamming distance not much larger than $km$.


To be specific, we first fix some threshold $\thresh \in (k,1]$.
Intuitively, the optimization corresponds to modifying \alg to consider only paths $\sigma$ through the trie such that $\hamm_S(\initprog, \sigma) \leq \thresh |S|$.
This is performed using the \emph{unblocked} function in Figure~\ref{fig:newalg}.
Since we are ignoring certain paths through the trie, we need to ask:
\emph{How much does this decrease the probability of the algorithm succeeding?}---%
It depends on the tightness of the threshold, which we address in Section~\ref{sec:failureprob}.
In Section~\ref{subsec:adaptive}, we discuss how to
adaptively modify the threshold $\thresh$ as \newalg is executing,
which is useful when a good $\tau$ is unknown a priori.

\subsection{Analyzing Failure Probability with Thresholding}
\label{sec:failureprob}

Using \newalg, the choice of $\tau$ will affect both
\rone how many synthesis queries are performed, and
\rtwo the likelihood that we \emph{miss} optimal solutions;
in this section we explore the latter point.\footnote{The former point is a difficult combinatorial question
that to our knowledge has no precedent in the computational learning literature, and so we leave it as future work.}
Interestingly, we will see that all of the analysis is dependent only
on parameters directly related to the threshold;
notably, none of this analysis is dependent on the complexity of $\progs$
(i.e.\ its VC dimension).

If we really want to learn (something close to) a program $\prog^*$,
then we should use a value of the threshold $\tau$ such that
$\pr[S\sim\dist^m]{\hamm_S(\initprog,\prog^*) \leq \thresh m}$ is large---%
to do so requires knowledge of the distribution of $\hamm_S(\initprog,\prog^*)$.
Recall the \emph{binomial distribution}:
for parameters $(n,p)$, it describes the number of successes
in $n$-many trials of an experiment that has success probability $p$.
\begin{claim}
Fix $\prog$ and let $k = \pr[x\sim\dist]{\initprog(x) \neq \prog(x)}$.
If $S$ is sampled from $\dist^m$, then $\hamm_S(\initprog, \prog)$ is
binomially distributed with parameters $(m, k)$.
\end{claim}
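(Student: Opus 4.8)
The plan is to recognize $\hamm_S(\initprog, \prog)$ as a sum of independent indicator random variables and then appeal directly to the characterization of the binomial distribution recalled just above the claim. First I would write $S = \{x_1, \ldots, x_m\}$, where each $x_i$ is drawn independently from $\dist$, and for each index $i$ define the indicator $Z_i = \mathds{1}[\initprog(x_i) \neq \prog(x_i)]$. By the definition of Hamming distance, $\hamm_S(\initprog, \prog) = \sum_{i=1}^m Z_i$, so it suffices to understand the joint distribution of the $Z_i$.

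Next I would argue that each $Z_i$ is a Bernoulli trial with success probability exactly $k$. The key observation is that both $\prog$ and $\initprog$ are fixed \emph{before} the sample set is drawn, so whether $\initprog(x_i) \neq \prog(x_i)$ is determined entirely by the single draw $x_i$. Since $x_i \sim \dist$, the disagreement event occurs with probability $k$ by the very definition of $k = \pr[x\sim\dist]{\initprog(x) \neq \prog(x)}$; hence $Z_i$ takes value $1$ with probability $k$ and $0$ otherwise. Furthermore, because the draws $x_1, \ldots, x_m$ are mutually independent (being i.i.d.\ samples from $\dist$), the indicators $Z_1, \ldots, Z_m$ are mutually independent as well.

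Finally, a sum of $m$ independent Bernoulli($k$) trials counts the number of ``successes'' in $m$ trials each succeeding with probability $k$, which is precisely the binomial distribution with parameters $(m, k)$ as recalled immediately before the claim. This establishes the result. The statement is essentially immediate, and there is no substantive obstacle; the only point that genuinely warrants care is the independence of the $Z_i$, which hinges on the fact that $\prog$ and $\initprog$ are held fixed while the samples are drawn independently, so that each disagreement event is governed by its own separate draw rather than sharing randomness across coordinates.
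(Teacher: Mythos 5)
Your proof is correct and matches the paper's reasoning: the paper states this claim without an explicit proof, treating it as immediate from the definition of the binomial distribution recalled just beforehand, and your sum-of-independent-Bernoulli$(k)$-indicators argument is precisely that implicit justification spelled out. The point you flag as needing care---that $\prog$ and $\initprog$ are fixed before the i.i.d.\ draws, so the disagreement indicators are independent with success probability exactly $k$---is indeed the only substantive content, and you handle it correctly.
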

Next, we will use our knowledge of this distribution
to reason about the \emph{failure probability},
i.e.\ that \newalg does not preserve the convergence result of \alg.

The simplest argument we can make is a union-bound style argument:
the thresholded algorithm can ``fail'' by
\rone failing to sample an $\varepsilon$-net, or otherwise
\rtwo sampling a set on which the optimal solution has a Hamming distance
that is not representative of its actual distance.
We provide the quantification of this failure probability in the following theorem:
\begin{theorem}\label{thm:thresherror}
Let $\prog^*$ be a target $\varepsilon$-robust program
with $k = \pr[x\sim\dist]{\initprog(x) \neq \prog^*(x)}$,
and let $\delta$ be the probability that $m$ samples
do not form an $\varepsilon$-net for $\prog^*$.
If we run the \newalg with $\thresh \in (k,1]$,
then the failure probability is at most $\delta + \pr{X > \thresh m}$
where $X \sim \textnormal{Binomial}(m, k)$.
\end{theorem}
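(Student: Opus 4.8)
The plan is to show that \newalg succeeds---in the same sense as the convergence guarantee of Theorem~\ref{thm:repairpac}, i.e.\ it enumerates some $\prog$ with $\pr[x\sim\dist]{\prog^*(x)\neq\prog(x)}\leq\varepsilon$ satisfying $\post$---whenever two ``good'' events both hold, and then to apply a union bound to their complements. Let $\sigma^*$ be the distinguished constraint string obtained by labeling $S=\{x_1,\ldots,x_m\}$ with $\prog^*$, i.e.\ $\sigma^*(i)=\prog^*(x_{i+1})$. The first good event is that $S$ is an $\varepsilon$-net for $\prog^*$; the second is that $\hamm_S(\initprog,\prog^*)\leq\thresh m$. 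Note that, since $\hamm_S$ accepts a string argument, $\hamm_S(\initprog,\sigma^*)=\hamm_S(\initprog,\prog^*)$.

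First I would recall why $\sigma^*$ is the witness that drives convergence. Every prefix $\sigma^*_{\leq\ell}$ is a realizable dichotomy---realized by $\prog^*$ itself---so whenever \newalg explores a node along the path $\sigma^*$, whether by a synthesis query or by solution propagation, it obtains a non-$\bot$ program. In particular the leaf program $\prog_{\sigma^*}$ agrees with $\prog^*$ on all of $S$. Under the first good event, the contrapositive of the $\varepsilon$-net definition yields $\pr[x\sim\dist]{\prog^*(x)\neq\prog_{\sigma^*}(x)}\leq\varepsilon$, placing $\prog_{\sigma^*}$ in $\prog^*$'s $\varepsilon$-ball, and $\varepsilon$-robustness of $\prog^*$ then gives $\sem{\prog_{\sigma^*}}(\dist)\models\post$. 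Thus, exactly as for \alg, reaching $\sigma^*$ suffices to succeed.

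The crux---and the only genuinely new step relative to \alg---is verifying that \newalg actually explores the entire path to $\sigma^*$ despite its thresholding. Here I would use monotonicity of Hamming distance along prefixes: for every $\ell$ we have $\hamm_S(\initprog,\sigma^*_{\leq\ell})\leq\hamm_S(\initprog,\sigma^*)=\hamm_S(\initprog,\prog^*)$. Hence, under the second good event, at the final depth $\mathit{depth}=m$ the guard $\mathit{unblocked}(\sigma^*_{\leq\ell})$, which asserts $\hamm_S(\initprog,\sigma^*_{\leq\ell})\leq\thresh m$, holds for \emph{every} prefix simultaneously. Because the \emph{unblocked} predicate is evaluated against the current $\mathit{depth}$ rather than $|\sigma|$, a node is never permanently blocked, and the only permanent pruning stems from a $\bot$-valued ancestor---which cannot occur here since every prefix is realizable. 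A short induction on $\ell$ then shows that by the time depth-$m$ exploration completes, every prefix of $\sigma^*$, and $\sigma^*$ itself, has been added to $\mathit{explored}$ with a non-$\bot$ program.

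Finally I would assemble the bound. The two steps above show that the success event contains the intersection of the two good events, so the failure event is contained in their union; a union bound gives failure probability at most $\pr[S\sim\dist^m]{S\text{ is not an }\varepsilon\text{-net for }\prog^*}+\pr[S\sim\dist^m]{\hamm_S(\initprog,\prog^*)>\thresh m}$. The first term is $\delta$ by hypothesis, and by the preceding Claim $\hamm_S(\initprog,\prog^*)$ is distributed as $X\sim\textnormal{Binomial}(m,k)$, so the second term is $\pr{X>\thresh m}$, yielding the stated bound $\delta+\pr{X>\thresh m}$. I expect the main obstacle to be the middle step---carefully arguing that thresholding never severs the path to $\sigma^*$---since it requires reasoning about the interaction between the depth-indexed \emph{unblocked} guard and the incremental Deepen/Explore dynamics, rather than any probabilistic estimate.
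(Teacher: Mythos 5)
Your proposal is correct and takes essentially the same route as the paper: the paper establishes Theorem~\ref{thm:thresherror} by exactly this union bound over the two failure modes (the $m$ samples failing to form an $\varepsilon$-net for $\prog^*$, and $\hamm_S(\initprog,\prog^*) > \thresh m$), combined with the Claim that $\hamm_S(\initprog,\prog^*)$ is $\textnormal{Binomial}(m,k)$-distributed. Your middle step---verifying via prefix-monotonicity of $\hamm_S$ that the depth-indexed \emph{unblocked} guard can only temporarily, never permanently, block the path to $\sigma^*$, so the realizable prefixes are all explored by the time depth $m$ completes---is left implicit in the paper and is a sound elaboration rather than a different argument.
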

In other words, we can use tail probabilities of the binomial distribution
to bound the probability that the threshold
causes us to ``miss'' a desirable program we otherwise would have enumerated.
Explicitly, we have the following corollary:
\begin{corollary}\label{cor:error}
\newalg increases failure probability
(relative to \alg)  by at most
$
\pr{X > \thresh m} =
\sum_{i = \lfloor \thresh m \rfloor + 1}^m {m \choose i} k^i (1-k)^{m-i}.
$
\end{corollary}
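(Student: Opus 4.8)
The plan is to read this corollary off directly from Theorem~\ref{thm:thresherror} by isolating the single extra failure term and then expanding it as a binomial tail. The key observation is that \alg and \newalg share the same failure mode of not sampling an $\varepsilon$-net (probability $\delta$); indeed, \newalg only ever explores a subset of the trie nodes that \alg would, so the \emph{only} additional way for \newalg to fail is for the target path (the one tracking $\prog^*$) to be blocked by the threshold. First I would recall from Section~\ref{sec:convergence} that \alg fails with probability at most $\delta$, the probability that $m$ samples miss an $\varepsilon$-net for $\prog^*$. Theorem~\ref{thm:thresherror} then bounds \newalg's failure probability by $\delta + \pr{X > \thresh m}$ with $X \sim \textnormal{Binomial}(m,k)$, so subtracting the common baseline $\delta$ leaves an increase of at most $\pr{X > \thresh m}$.

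It then remains only to write $\pr{X > \thresh m}$ explicitly. Because $X$ measures an integer-valued Hamming distance---by the Claim preceding the theorem, $\hamm_S(\initprog,\prog^*) \sim \textnormal{Binomial}(m,k)$, so $X$ is exactly the right random variable---the event $\{X > \thresh m\}$ coincides with $\{X \geq \lfloor \thresh m \rfloor + 1\}$. Substituting the binomial probability mass function $\pr{X = i} = \binom{m}{i} k^i (1-k)^{m-i}$ and summing over the tail yields
\[
\pr{X > \thresh m} = \sum_{i = \lfloor \thresh m \rfloor + 1}^m \binom{m}{i} k^i (1-k)^{m-i},
\]
which is precisely the claimed expression.

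I do not expect any substantial obstacle here: all of the probabilistic content is already carried by Theorem~\ref{thm:thresherror}, and the corollary amounts to (i) recognizing $\delta$ as the failure term shared with \alg, so that the difference of the two bounds collapses to a single binomial tail, and (ii) rewriting that tail as a finite sum via the binomial mass function. The only point requiring a moment of care is the lower summation index, where the strict inequality $X > \thresh m$ on an integer-valued variable must be translated into the lower bound $\lfloor \thresh m \rfloor + 1$.
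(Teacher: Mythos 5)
Your proposal is correct and matches the paper's (implicit) justification: the corollary is read off from Theorem~\ref{thm:thresherror} by subtracting the shared $\varepsilon$-net failure term $\delta$ and expanding the binomial tail $\pr{X > \thresh m}$ via the probability mass function, with the floor in the summation index arising exactly as you note from the integrality of $X$. Your event-level remark that the only new failure mode is the target string being blocked, i.e.\ $\hamm_S(\initprog,\prog^*) > \thresh m$, is precisely the union-bound decomposition the paper sketches before the theorem, so no gap remains.
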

Informally, when $m$ is \emph{not too small},
$k$ is \emph{not too large},
and $\thresh$ is \emph{reasonably forgiving},
these tail probabilities can be quite small.
We can even analyze the asymptotic behavior by using any existing upper bounds
on the binomial distribution's tail probabilities---%
importantly, the additional error diminishes exponentially as $m$ increases,
dependent on the size of $\thresh$ relative to $k$.
\begin{corollary}\label{cor:asymp}
\newalg increases failure probability by at most
$e^{-2m(\thresh-k)^2}$.%
\footnote{
A more precise (though less convenient) bound is
$e^{-m(\thresh \ln \frac{\thresh}{k} + (1-\thresh) \ln \frac{1-\thresh}{1-k})}$.
}
\end{corollary}

\begin{example}
Suppose $m = 100$, $k = 0.1$, and $\thresh = 0.2$.
Then the extra failure probability term in Theorem~\ref{thm:thresherror}
is less than $0.001$.
\end{example}

As stated at the beginning of this subsection,
the balancing act is to choose $\thresh$
\rone small enough so that the algorithm is still fast for large $m$, yet
\rtwo large enough so that the algorithm is still likely to learn the desired programs.
The further challenge is to relax our initial strong assumption
that we know the optimal $k$ a priori when determining $\tau$,
which we address in the following subsection.

\subsection{Adaptive Threshold}\label{subsec:adaptive}

Of course, we do not have the angelic knowledge that lets us pick
an ideal threshold $\thresh$;
the only absolutely sound choice we can make is the trivial $\thresh = 1$.
Fortunately, we can begin with this choice of $\thresh$
and \emph{adaptively} refine it as the search progresses.
Specifically, every time we encounter a correct program $\prog$
such that $k = \error(\prog)$,
we can refine $\thresh$ to reflect our newfound knowledge that
``the best solution has distance of at most $k$.''

We refer to this refinement as \emph{adaptive \newalg}.
The modification involves the addition of the following rule to Figure~\ref{fig:newalg}:

\begin{center}
\begin{prooftree}
    \mathit{best} \neq \bot
    \justifies
    \thresh \gets g(\oerror(\mathit{best}))
    \using \text{Refine Threshold (for some $g : [0,1] \rightarrow [0,1]$)}
\end{prooftree}
\end{center}

We can use any (non-decreasing) function $g$ to update the threshold
$\thresh \gets g(k)$.
The simplest choice would be the identity function (which we use in our experiments),
although one could use a looser function so as not to over-prune the search.
If we choose functions of the form $g(k) = k + b$,
then Corollary~\ref{cor:asymp} allows us to make (slightly weak) claims of the following form:
\begin{claim}
Suppose the adaptive algorithm completes a search of up to depth $m$
yielding a best solution with error $k$
(so we have the final threshold value $\thresh = k + b$).
Suppose also that $\prog^*$ is an optimal $\varepsilon$-robust program at distance $k - \eta$.
The optimization-added failure probability
(as in Corollary~\ref{cor:error})
for a run of (non-adaptive) \newalg completing depth $m$ and using this $\thresh$
is at most $e^{-2m(b + \eta)^2}$.
\end{claim}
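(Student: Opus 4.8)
The plan is to recognize that this claim is, at its core, a direct instantiation of Corollary~\ref{cor:asymp}, once the notation is carefully aligned. The main conceptual hurdle is not analytic but bookkeeping: the symbol $k$ plays two different roles here. In the statement of the claim, $k$ denotes the error of the \emph{best solution found} by the adaptive run (so that the final threshold is $\thresh = k + b$), whereas in Corollary~\ref{cor:asymp}—inheriting from Theorem~\ref{thm:thresherror}—the analogous quantity is the \emph{distance of the target program} $\prog^*$. So the first step is to pin down precisely which quantities feed into the corollary.

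First I would observe that, by hypothesis, the optimal $\varepsilon$-robust program $\prog^*$ has distance $\error(\prog^*) = k - \eta$ from $\initprog$, while the (hypothetical non-adaptive) run of \newalg under consideration uses the fixed threshold $\thresh = k + b$. These are exactly the two inputs Corollary~\ref{cor:asymp} requires: the target's distance (playing the role of the corollary's $k$) is $k - \eta$, and the threshold is $k + b$. Before substituting, I would check the applicability condition, namely $\thresh \in (k - \eta, 1]$. Since $\prog^*$ is optimal while the best found solution has error $k$, we have $k - \eta \leq k$, hence $\eta \geq 0$; combined with the natural assumption $b \geq 0$ (so that $g(k) = k + b$ yields a valid, non-over-pruning threshold strictly above the best-found error), this gives $\thresh = k + b > k - \eta$ except in the degenerate case $b = \eta = 0$. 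Under the mild normalization $\thresh \leq 1$, the corollary applies.

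Finally, substituting directly into Corollary~\ref{cor:asymp}, the optimization-added failure probability is bounded by $e^{-2m(\thresh - (k - \eta))^2}$, and computing the exponent gives $\thresh - (k - \eta) = (k + b) - (k - \eta) = b + \eta$, which yields the stated bound $e^{-2m(b + \eta)^2}$. The one genuine subtlety—and the reason the claim is phrased as a statement about a non-adaptive run using the \emph{final} threshold rather than about the truly adaptive execution—is that the adaptive algorithm used looser thresholds at earlier depths, so its actual pruning behavior differs from that of the surrogate non-adaptive run. I would therefore emphasize that the claim deliberately sidesteps this mismatch by bounding only the surrogate run, which is exactly why the authors flag the resulting guarantee as ``slightly weak.''
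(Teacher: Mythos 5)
Your proposal is correct and follows exactly the paper's intended derivation: the claim is a direct instantiation of Corollary~\ref{cor:asymp}, with the target distance $k-\eta$ playing the role of the corollary's $k$ and the threshold set to $\thresh = k+b$, so the exponent becomes $\thresh-(k-\eta)=b+\eta$, yielding $e^{-2m(b+\eta)^2}$. Your disambiguation of the overloaded symbol $k$, the check that $\thresh$ exceeds the target distance, and your observation that the claim deliberately bounds a surrogate non-adaptive run with the final threshold (which is precisely why the paper calls the guarantee ``slightly weak'') all match the paper's own reasoning.
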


\section{Evaluation}\label{sec:eval}

\paragraph{Implementation}
In this section, we evaluate our new algorithm \newalg (Figure~\ref{fig:newalg})
and its adaptive variant (Section~\ref{subsec:adaptive})
against  \alg (i.e., \newalg with $\thresh=1$). Both algorithms are implemented in Python
and use the SMT solver Z3~\cite{de2008z3} to implement
a sketch-based synthesizer $\osynth$.
We employ statistical verification for $\overif$ and $\oerror$:
we use  Hoeffding's inequality for estimating probabilities in $\post$ and $\error$.
Probabilities are computed
with 95\% confidence, leaving our oracles potentially unsound.

\paragraph{Research Questions}
Our evaluation aims to answer the following questions:
\begin{description}
\item[RQ1] Is adaptive \newalg more effective/precise than \newalg?
\item[RQ2] Is  \newalg more effective/precise  than \alg?
\item[RQ3] Can   \newalg solve challenging synthesis problems?
\end{description}

We experiment on three sets of benchmarks:
\rone synthetic examples for which the optimal solutions can be computed analytically~(Section~\ref{sec:toy}),
\rtwo the set of benchmarks considered in the original \alg paper~(Section~\ref{sec:original}),
\rthree a variant of the thermostat-controller synthesis problem presented in~\cite{chaudhuri14}~(Section~\ref{sec:controller}).

\subsection{Synthetic Benchmarks}
\label{sec:toy}

We consider a class of synthetic programs for which we can compute the optimal solution exactly;
this lets us compare the results of our implementation to an ideal baseline.
Here, the program model \progs is defined as the set of axis-aligned hyperrectangles within $[-1,1]^d$ ($d \in \{1, 2, 3\}$
and the VC dimension is $2d$),
and the input distribution $\dist$ is such that inputs are
 distributed uniformly over $[-1,1]^d$.
We fix some probability mass $b \in \{0.05, 0.1, 0.2\}$ and
define the benchmarks  so  that the best error for a correct solution is exactly $b$
\iftoggle{full}{%
(see Appendix~\ref{app:apptoy}).
}{%
(for details, see~\cite{fullversion}).
}

We run our implementation using thresholds $\thresh \in \{0.07, 0.15, 0.3, 0.5, 1\}$,
omitting those values for which $\thresh < b$;
additionally, we also consider an adaptive run
where $\thresh$ is initialized as the value $1$,
and whenever a new best solution is enumerated with error $k$,
we update $\thresh \gets k$.
Each combination of parameters was run for a period of 2 minutes.
Figure~\ref{fig:toy} fixates on $d=1$, $b=0.1$
and shows each of the following as a function of time:
\rone the depth completed by the search (i.e.\ the current size of the sample set), and
\rtwo the best solution found by the search.
\iftoggle{full}{%
(See Appendix~\ref{app:apptoy} for other configurations of $(d,b)$.)
}{%
(See our full version of the paper~\cite{fullversion} for other configurations of $(d,b)$.)
}

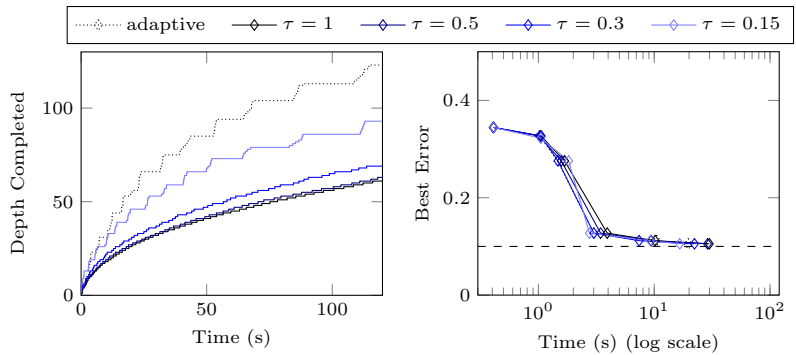
\begin{figure}[t]
\centering
\scriptsize
\pgfplotsset{filter discard warning=false}
\pgfplotscreateplotcyclelist{mylist}{
    {black, densely dotted},
    {black},
    {blue!50!black},
    {blue},
    {blue!50},
}
\begin{tikzpicture}
    \begin{groupplot}[
            group style={group size=2 by 1,
                horizontal sep=.5in},
            width=2.2in,
            cycle list name=mylist,
            ylabel near ticks,
            xlabel near ticks
        ]
        \nextgroupplot[
            xlabel=Time (s),
            ylabel=Depth Completed,
            xmin=0, xmax=120,
            ymin=0, ymax=130,
        ]
        \foreach \n in {a1, o1, o05, o03, o015}{
            \addplot+[const plot] table [x=\n_dvt_time, y=\n_dvt_depth, col sep=comma]
                {figures/toy_csv/box_d1_b02_s1.csv};
        }
        \nextgroupplot[
            legend to name={toylegend},
            legend style={/tikz/every even column/.append style={column sep=1em},
                legend columns=5},
            xlabel=Time (s) (log scale),
            ylabel=Best Error,
            xmin=0.3, xmax=120,
            ymin=0, ymax=0.5,
            xmode=log
        ]
        \foreach \n in {a1, o1, o05, o03, o015}{
            \addplot+[mark=diamond] table [x=\n_evt_time, y=\n_evt_error, col sep=comma]
                {figures/toy_csv/box_d1_b02_s1.csv};
        }
        \addplot[dashed, domain=0.3:120] {0.1};
        \addlegendentry{adaptive}
        \addlegendentry{$\thresh=1$}
        \addlegendentry{$\thresh=0.5$}
        \addlegendentry{$\thresh=0.3$}
        \addlegendentry{$\thresh=0.15$}
    \end{groupplot}
    \path (group c1r1.north east) -- node[above]{\ref{toylegend}} (group c2r1.north west);
\end{tikzpicture}
\caption{Synthetic hyperrectangle problem instance with parameters $d=1$, $b=0.1$.}
\label{fig:toy}
\end{figure}

By studying Figure~\ref{fig:toy} we see that
the adaptive threshold search
performs at least as well as the tight thresholds fixed a priori
because reasonable solutions are found early.
In fact, all search configurations find solutions very close to the optimal error
(indicated by the horizontal dashed line).
Regardless, they reach different depths, and
\emph{the main advantage of reaching large depths
concerns the strength of the optimality guarantee.}
Note, also, that small $\thresh$ values are necessary to see
improvements in the completed depth of the search.
Indeed, the discrepancy between the depth-versus-time functions
diminishes drastically for the problem instances with larger values of $b$
\iftoggle{full}{%
(see Appendix~\ref{sec:apptoy});
}{%
(See our full version of the paper~\cite{fullversion});
}
the gains of the optimization are contingent on the existence
of correct solutions \emph{close} to the functional specification.

\textbf{Findings (RQ1):} \newalg \emph{does} tend to find \emph{reasonable} solutions at early depths
and near-optimal solutions at later depths, thus
adaptive \newalg is more effective than \newalg,
and we use it throughout our remaining experiments.

\subsection{Original DIGITS Benchmarks}
\label{sec:original}

The original \alg paper~\cite{digits} evaluates on a set
of 18 repair problems of varying complexity.
The functional specifications are machine-learned decision trees and support vector machines,
and each search space $\progs$ involves the set of programs formed
by replacing some number of real-valued constants in the program with holes.
The postcondition is a form of \emph{algorithmic fairness}---e.g., the program should output
true on inputs of type $A$ as often as it does on inputs of type $B$~\cite{feldman2015certifying}.
For each such repair problem,
we run both \alg
and adaptive \newalg (again, with initial $\thresh = 1$ and the identity refinement function).
Each benchmark is run for 10 minutes,
where the same sample set is used for both algorithms.

\begin{figure}[t]
\centering
\scriptsize
\begin{tikzpicture}
\begin{axis}[
        title=Depth Completed,
        xlabel=\alg,
        ylabel=adaptive \newalg,
        xmin=0, xmax=400,
        ymin=0, ymax=400,
        axis equal image, 
        width=.5\textwidth
    ]
    \addplot [only marks, mark=+]
        table [x=odepth, y=adepth, col sep=comma] {figures/comparison.csv};
    \addplot [dashed, domain=0:400] {x};
    \addplot [dotted, domain=0:400] {2.4*x};
\end{axis}
\end{tikzpicture}
\quad
\begin{tikzpicture}
\begin{axis}[
        title=Best Error,
        xlabel=\alg,
        ylabel=adaptive \newalg,
        xmin=0, xmax=.3,
        ymin=0, ymax=.3,
        axis equal image, 
        width=.5\textwidth
    ]
    \addplot [only marks, mark=+]
        table [x=oerror, y=aerror, col sep=comma] {figures/comparison.csv};
    \addplot [dashed, domain=0:.3] {x};
\end{axis}
\end{tikzpicture}
\caption{Improvement of using adaptive \newalg
on the original \alg benchmarks.
Left: the dotted line marks the $2.4\times$ average increase in depth.}
\label{fig:comparison}
\end{figure}
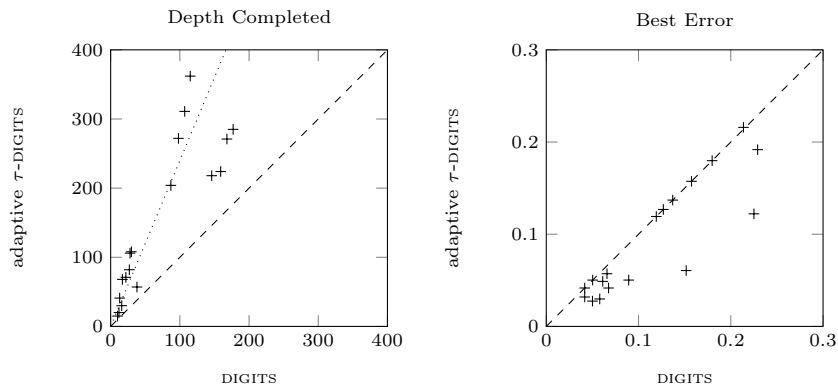

Figure~\ref{fig:comparison} shows, for each benchmark,
\rone the largest sample set size completed by adaptive \newalg versus \alg
(left---above the diagonal line indicates adaptive \newalg reaches further depths), and
\rtwo the error of the best solution found by adaptive \newalg versus \alg
(right---below the diagonal line indicates adaptive \newalg finds better solutions).
We see that adaptive \newalg reaches further depths on every problem instance,
many of which are substantial improvements,
and that it finds better solutions on 10 of the 18 problems.
For those which did not improve,
either the search was already deep enough that \alg was able to find near-optimal solutions,
or the complexity of the synthesis queries is such that the search
is still constrained to small depths.

\textbf{Findings (RQ2):} Adaptive \newalg
can find better solutions than those found by \alg and can reach greater search depths.

\subsection{Thermostat Controller}
\label{sec:controller}

We challenge adaptive \newalg with the task of synthesizing a thermostat controller,
borrowing the benchmark from~\cite{chaudhuri14}.
The input to the controller is the initial temperature of the environment;
since the world is uncertain, there is a specified probability distribution over the temperatures.
The controller itself is a program sketch consisting primarily of a single main loop:
iterations of the loop correspond to timesteps,
during which the synthesized parameters dictate an
incremental update made by the thermostat
based on the current temperature.
The loop runs for 40 iterations,
then terminates, returning the absolute value of the difference
between its final actual temperature and the target temperature.

The postcondition is a Boolean probabilistic correctness property
intuitively corresponding to controller safety,
e.g.\ with high probability, the temperature should never exceed certain thresholds.
In~\cite{chaudhuri14}, there is a quantitative objective in the form of
minimizing the expected value $\expec{|\mathit{actual}-\mathit{target}|}$---%
our setting does not admit optimizing with respect to expectations,
so we must modify the problem.
Instead, we fix some value $N$ ($N \in \{2, 4, 8\}$)
and have the program return $0$ when $|\mathit{actual}-\mathit{target}| < N$
and $1$ otherwise.
Our quantitative objective is to minimize the error from
the constant-zero functional specification $\initprog(x) \coloneqq 0$
(i.e.\ the actual temperature always gets close enough to the target).
\iftoggle{full}{%
The full specification of the controller is provided in Appendix~\ref{sec:apptherm}.
}{%
The full specification of the controller is provided in the full version of our paper~\cite{fullversion}.
}


We consider variants of the program where the thermostat runs for fewer timesteps
and try loop unrollings of size $\{5, 10, 20, 40\}$.
We run each benchmark for 10 minutes:
the final completed search depths and best error of solutions
are shown in Figure~\ref{fig:therm}. For this particular experiment,
we use the SMT solver CVC4~\cite{cvc4} because it performs better than Z3 on the
occurring SMT instances.

\begin{figure}[t]
\centering
\scriptsize
\pgfplotsset{filter discard warning=false}
\pgfplotscreateplotcyclelist{mylist}{
    {black, mark=+},
    {blue!75!black, mark=x},
    {blue!50, mark=asterisk}
}
\pgfplotsset{ 
    discard if not/.style 2 args={
        x filter/.code={
            \edef\tempa{\thisrow{#1}}
            \edef\tempb{#2}
            \ifx\tempa\tempb
            \else
                \def\pgfmathresult{inf}
            \fi
        }
    }
}
\begin{tikzpicture}
    \begin{groupplot}[
            group style={group size=2 by 1,
                horizontal sep=.5in},
            width=2.2in,
            cycle list name=mylist,
            ylabel near ticks,
            xlabel near ticks,
            xlabel=Unrolling,
            xtick={5, 10, 20, 40},
            xticklabels={5, 10, 20, 40},
            xmode=log,
        ]
        \nextgroupplot[
            ylabel=Depth,
            ymin=0,
        ]
        \foreach \n in {8,4,2}{
            \addplot+[discard if not={bound}{\n}] table [x=unrolling, y=depth, col sep=comma]
                {figures/therm.csv};
        }
        \nextgroupplot[
            legend to name={thermlegend},
            legend style={/tikz/every even column/.append style={column sep=1em},
                legend columns=5},
            ylabel=Best Error,
            ymin=0, ymax=1
        ]
        \foreach \n in {8,4,2}{
            \addplot+[discard if not={bound}{\n}] table [x=unrolling, y=error, col sep=comma]
                {figures/therm.csv};
        }
        \addlegendentry{$N=8$}
        \addlegendentry{$N=4$}
        \addlegendentry{$N=2$}
    \end{groupplot}
    \path (group c1r1.north east) -- node[above]{\ref{thermlegend}} (group c2r1.north west);
\end{tikzpicture}
\caption{Thermostat controller results.}
\label{fig:therm}
\end{figure}

As we would expect, for larger values of $N$
it is ``easier'' for the thermostat to reach the target temperature threshold
and thus the quality of the best solution increases in $N$.
However, with small unrollings (i.e.\ 5)
the synthesized controllers do not have enough iterations (time) to modify
the temperature  enough
for the probability mass of extremal temperatures to reach the target:
as we increase the number of unrollings to 10, we see that
better solutions can be found
since the set of programs are capable of stronger behavior.

On the other hand, the completed depth of the search plummets as the unrolling increases
due to the complexity of the $\osynth$ queries.
Consequently, for 20 and 40 unrollings,
adaptive \newalg synthesizes worse solutions
because it cannot reach the necessary depths to obtain better guarantees.

One final point of note is that
for $N=8$ and 10 unrollings, it seems that there is a sharp spike in the completed depth.
However, this is somewhat artificial:
because $N=8$ creates a very lenient quantitative objective,
an early $\osynth$ query happens to yield a program
with an error less than $10^{-3}$.
Adaptive \newalg then updates $\thresh \gets \approx 10^{-3}$
and skips most synthesis queries.

\textbf{Findings (RQ3):} Adaptive \newalg can synthesize small variants of a complex thermostat controller,
but cannot solve variants with many loop iterations.

%
%
%

\section{Related Work}\label{sec:related}

\paragraph{Synthesis \& Probability}
Program synthesis is a mature area with many powerful techniques.
The primary focus is on synthesis
under Boolean constraints, and probabilistic specifications have received less attention~\cite{digits,chaudhuri14,nori15,Kucera17}.
We discuss the works that are most related to ours.

\alg~\cite{digits} is the most relevant work. First, we show for the first time that
\alg only requires a number of synthesis queries  polynomial in the number of samples.
Second, our adaptive \newalg further reduces the number
of synthesis queries required to solve a synthesis problem without sacrificing correctness.

The technique of \emph{smoothed proof search}~\cite{chaudhuri14}
approximates a combination of functional correctness
and maximization of an expected value as a smooth, continuous function.
It then uses numerical methods to find a local optimum
of this function,
which translates to a synthesized program that is likely
to be correct and locally maximal.
The benchmarks described in Section~\ref{sec:controller} are variants of benchmarks
from~\cite{chaudhuri14}.
Smoothed proof search can minimize expectation; \newalg minimizes probability only.
However, unlike \newalg, smoothed proof search lacks
formal convergence guarantees and cannot support the rich
probabilistic postconditions we support, e.g., as in the fairness benchmarks.

Works on synthesis of probabilistic programs
are aimed at a different problem~\cite{nori15,chasins2017data,saad2019bayesian}: that of synthesizing a generative model of data. For example, Nori et al.~\cite{nori15} use sketches of probabilistic programs and complete them with a stochastic search.
Recently, Saad et al.~\cite{saad2019bayesian} synthesize an ensemble of probabilistic programs for learning Gaussian processes and other models.

K\v{u}cera et al.~\cite{Kucera17} present a technique for automatically synthesizing program transformations that
introduce uncertainty into a given program with the goal of satisfying given privacy policies---e.g., preventing information leaks.
They leverage the specific structure of their problem to reduce it to an SMT constraint solving problem.
The problem tackled in~\cite{Kucera17} is orthogonal to the one targeted in this paper and the techniques are therefore very different.


\paragraph{Stochastic Satisfiability}
Our problem is closely related to \abr{E-MAJSAT}~\cite{littman1998computational},
a special case of \emph{stochastic satisfiability}
(\abr{SSAT})~\cite{papadimitriou1985games}
and a means for formalizing probabilistic planning
problems.
\abr{E-MAJSAT}
is of \abr{NP}$^\abr{PP}$ complexity.
An  \abr{E-MAJSAT} formula has deterministic
and probabilistic variables.
The goal is to find an assignment
of deterministic variables
such that the probability that the formula
is satisfied is above a given threshold.
Our setting is similar,
but we operate over complex
program statements and have an additional
optimization
objective (i.e., the program should be close to the functional specification).
The deterministic variables in our
setting are the holes defining the search space;
the probabilistic variables are program
inputs.



\paragraph{Acknowledgements}
We thank Shuchi Chawla, Yingyu Liang, Jerry Zhu,
the entire fairness reading group at UW-Madison,
and Nika Haghtalab
for all of the detailed discussions.
This material is based upon work supported by the National Science Foundation
under grant numbers 1566015, 1704117, and 1750965.

\bibliographystyle{splncs04}
\bibliography{biblio}

\begin{thebibliography}{10}
\providecommand{\url}[1]{\texttt{#1}}
\providecommand{\urlprefix}{URL }
\providecommand{\doi}[1]{https://doi.org/#1}

\bibitem{digits}
Albarghouthi, A., D'Antoni, L., Drews, S.: Repairing decision-making programs
  under uncertainty. In: Majumdar, R., Kun{\v{c}}ak, V. (eds.) Computer Aided
  Verification. pp. 181--200. Springer International Publishing, Cham (2017)

\bibitem{BarowyGHZ15}
Barowy, D.W., Gulwani, S., Hart, T., Zorn, B.G.: Flashrelate: extracting
  relational data from semi-structured spreadsheets using examples. In:
  Proceedings of the 36th {ACM} {SIGPLAN} Conference on Programming Language
  Design and Implementation, Portland, OR, USA, June 15-17, 2015. pp. 218--228
  (2015). \doi{10.1145/2737924.2737952},
  \url{http://doi.acm.org/10.1145/2737924.2737952}

\bibitem{cvc4}
Barrett, C., Conway, C.L., Deters, M., Hadarean, L., Jovanovi\'{c}, D., King,
  T., Reynolds, A., Tinelli, C.: Cvc4. In: Proceedings of the 23rd
  International Conference on Computer Aided Verification. pp. 171--177.
  CAV'11, Springer-Verlag, Berlin, Heidelberg (2011),
  \url{http://dl.acm.org/citation.cfm?id=2032305.2032319}

\bibitem{Bastani0AL17}
Bastani, O., Sharma, R., Aiken, A., Liang, P.: Synthesizing program input
  grammars. In: Proceedings of the 38th {ACM} {SIGPLAN} Conference on
  Programming Language Design and Implementation, {PLDI} 2017, Barcelona,
  Spain, June 18-23, 2017. pp. 95--110 (2017). \doi{10.1145/3062341.3062349},
  \url{http://doi.acm.org/10.1145/3062341.3062349}

\bibitem{blumer1989learnability}
Blumer, A., Ehrenfeucht, A., Haussler, D., Warmuth, M.K.: Learnability and the
  vapnik-chervonenkis dimension. Journal of the ACM (JACM)  \textbf{36}(4),
  929--965 (1989)

\bibitem{chasins2017data}
Chasins, S., Phothilimthana, P.M.: Data-driven synthesis of full probabilistic
  programs. In: International Conference on Computer Aided Verification. pp.
  279--304. Springer (2017)

\bibitem{chaudhuri14}
Chaudhuri, S., Clochard, M., Solar-Lezama, A.: Bridging boolean and
  quantitative synthesis using smoothed proof search. In: POPL. vol.~49, pp.
  207--220. ACM (2014)

\bibitem{de2008z3}
De~Moura, L., Bj{\o}rner, N.: Z3: An efficient smt solver. In: International
  conference on Tools and Algorithms for the Construction and Analysis of
  Systems. pp. 337--340. Springer (2008)

\bibitem{synet}
El-Hassany, A., Tsankov, P., Vanbever, L., Vechev, M.: Network-wide
  configuration synthesis  (2017)

\bibitem{feldman2015certifying}
Feldman, M., Friedler, S.A., Moeller, J., Scheidegger, C., Venkatasubramanian,
  S.: Certifying and removing disparate impact. In: Proceedings of the 21th ACM
  SIGKDD International Conference on Knowledge Discovery and Data Mining. pp.
  259--268. ACM (2015)

\bibitem{goldberg95}
Goldberg, P.W., Jerrum, M.: Bounding the vapnik-chervonenkis dimension of
  concept classes parameterized by real numbers. Machine Learning
  \textbf{18}(2-3),  131--148 (1995). \doi{10.1007/BF00993408},
  \url{https://doi.org/10.1007/BF00993408}

\bibitem{Gulwani11}
Gulwani, S.: Automating string processing in spreadsheets using input-output
  examples. In: Proceedings of the 38th {ACM} {SIGPLAN-SIGACT} Symposium on
  Principles of Programming Languages, {POPL} 2011, Austin, TX, USA, January
  26-28, 2011. pp. 317--330 (2011). \doi{10.1145/1926385.1926423},
  \url{http://doi.acm.org/10.1145/1926385.1926423}

\bibitem{Gulwani14}
Gulwani, S.: Program synthesis. In: Software Systems Safety, pp. 43--75 (2014).
  \doi{10.3233/978-1-61499-385-8-43},
  \url{https://doi.org/10.3233/978-1-61499-385-8-43}

\bibitem{Gulwani16}
Gulwani, S.: Programming by examples - and its applications in data wrangling.
  In: Dependable Software Systems Engineering, pp. 137--158 (2016).
  \doi{10.3233/978-1-61499-627-9-137},
  \url{https://doi.org/10.3233/978-1-61499-627-9-137}

\bibitem{kearns94introduction}
Kearns, M.J., Vazirani, U.V.: An Introduction to Computational Learning Theory.
  MIT Press, Cambridge, MA, USA (1994)

\bibitem{Kucera17}
Ku\v{c}era, M., Tsankov, P., Gehr, T., Guarnieri, M., Vechev, M.: Synthesis of
  probabilistic privacy enforcement. In: Proceedings of the 2017 ACM SIGSAC
  Conference on Computer and Communications Security. pp. 391--408. CCS '17,
  ACM, New York, NY, USA (2017). \doi{10.1145/3133956.3134079},
  \url{http://doi.acm.org/10.1145/3133956.3134079}

\bibitem{littman1998computational}
Littman, M.L., Goldsmith, J., Mundhenk, M.: The computational complexity of
  probabilistic planning. Journal of Artificial Intelligence Research
  \textbf{9},  1--36 (1998)

\bibitem{nori15}
Nori, A.V., Ozair, S., Rajamani, S.K., Vijaykeerthy, D.: Efficient synthesis of
  probabilistic programs. SIGPLAN Not.  \textbf{50}(6),  208--217 (Jun 2015).
  \doi{10.1145/2813885.2737982},
  \url{http://doi.acm.org/10.1145/2813885.2737982}

\bibitem{papadimitriou1985games}
Papadimitriou, C.H.: Games against nature. Journal of Computer and System
  Sciences  \textbf{31}(2),  288--301 (1985)

\bibitem{PolozovG15}
Polozov, O., Gulwani, S.: Flashmeta: a framework for inductive program
  synthesis. In: Proceedings of the 2015 {ACM} {SIGPLAN} International
  Conference on Object-Oriented Programming, Systems, Languages, and
  Applications, {OOPSLA} 2015, part of {SPLASH} 2015, Pittsburgh, PA, USA,
  October 25-30, 2015. pp. 107--126 (2015). \doi{10.1145/2814270.2814310},
  \url{http://doi.acm.org/10.1145/2814270.2814310}

\bibitem{RazaG17}
Raza, M., Gulwani, S.: Automated data extraction using predictive program
  synthesis. In: Proceedings of the Thirty-First {AAAI} Conference on
  Artificial Intelligence, February 4-9, 2017, San Francisco, California,
  {USA.} pp. 882--890 (2017),
  \url{http://aaai.org/ocs/index.php/AAAI/AAAI17/paper/view/15034}

\bibitem{saad2019bayesian}
Saad, F.A., Cusumano-Towner, M.F., Schaechtle, U., Rinard, M.C., Mansinghka,
  V.K.: Bayesian synthesis of probabilistic programs for automatic data
  modeling. Proceedings of the ACM on Programming Languages  \textbf{3}(POPL),
  ~37 (2019)

\bibitem{sauer72}
Sauer, N.: On the density of families of sets. Journal of Combinatorial Theory,
  Series A  \textbf{13}(1),  145--147 (1972)

\bibitem{Schkufza0A16}
Schkufza, E., Sharma, R., Aiken, A.: Stochastic program optimization. Commun.
  {ACM}  \textbf{59}(2),  114--122 (2016). \doi{10.1145/2863701},
  \url{http://doi.acm.org/10.1145/2863701}

\bibitem{shelah72}
Shelah, S.: A combinatorial problem; stability and order for models and
  theories in infinitary languages. Pacific Journal of Mathematics
  \textbf{41}(1),  247--261 (1972)

\bibitem{solar08}
Solar-Lezama, A.: Program Synthesis by Sketching. Ph.D. thesis, Berkeley, CA,
  USA (2008), aAI3353225

\bibitem{SrinivasanR15}
Srinivasan, V., Reps, T.W.: Synthesis of machine code from semantics. In:
  Proceedings of the 36th {ACM} {SIGPLAN} Conference on Programming Language
  Design and Implementation, Portland, OR, USA, June 15-17, 2015. pp. 596--607
  (2015). \doi{10.1145/2737924.2737960},
  \url{http://doi.acm.org/10.1145/2737924.2737960}

\bibitem{popl17net}
Subramanian, K., D'Antoni, L., Akella, A.: Genesis: synthesizing forwarding
  tables in multi-tenant networks. In: Proceedings of the 44th {ACM} {SIGPLAN}
  Symposium on Principles of Programming Languages, {POPL} 2017, Paris, France,
  January 18-20, 2017. pp. 572--585 (2017),
  \url{http://dl.acm.org/citation.cfm?id=3009845}

\bibitem{WangGS16}
Wang, X., Gulwani, S., Singh, R.: {FIDEX:} filtering spreadsheet data using
  examples. In: Proceedings of the 2016 {ACM} {SIGPLAN} International
  Conference on Object-Oriented Programming, Systems, Languages, and
  Applications, {OOPSLA} 2016, part of {SPLASH} 2016, Amsterdam, The
  Netherlands, October 30 - November 4, 2016. pp. 195--213 (2016).
  \doi{10.1145/2983990.2984030},
  \url{http://doi.acm.org/10.1145/2983990.2984030}

\end{thebibliography}

\iftoggle{full}{%
\appendix
\section{Miscellaneous Proofs}

\subsection{Main Theorem}

\begin{proof}[Theorem~\ref{thm:main}]
Let $S$ be the set of samples, with $|S| = m$.
By Lemma~\ref{lem:algdich},
the number of queries is at most $|S| \lvert\dich{\progs}{S}\rvert$,
which is in turn at most $m \growth{\progs}{m}$.
Applying Lemma~\ref{lem:sauer} immediately gives us the $O(m^{d+1})$ bound.
\qed
\end{proof}

\subsection{Interval Details}

Here we expand on the details related to the set of interval programs
(Figure~\ref{fig:ex})
that were elided in the various examples in Section~\ref{sec:theory}.

\begin{claim}
For any (finite) set $S \subset [0,1]$,
$\lvert\dich{[0,a]}{S}\rvert = |S| + 1$.
\end{claim}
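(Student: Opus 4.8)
The plan is to reduce the claim to the standard fact that the family $\{[0,a]\}$ acts on $S$ as a family of threshold classifiers, whose realizable dichotomies are exactly the ``prefixes'' of $S$ once its points are sorted. First I would set $n = |S|$ and sort the points of $S$ as $x_1 < x_2 < \cdots < x_n$, where the inequalities are strict because $S$ is a set of distinct reals. The key observation is that, since $S \subset [0,1]$ and the program with parameter $a$ returns $1$ exactly when $0 \le x \le a$, for every $x \in S$ we have $\prog_a(x) = 1 \iff x \le a$. Hence each value of the threshold $a$ induces a \emph{monotone} labeling of $S$: whenever $x_i \le a$ and $x_j < x_i$, we also have $x_j \le a$, so $x_j$ is labeled $1$ as well.

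From this monotonicity I would derive the upper bound. Any realizable dichotomy $f$ must have the form $x_1,\dots,x_j \mapsto 1$ and $x_{j+1},\dots,x_n \mapsto 0$ for some $j \in \{0,1,\dots,n\}$: if some $x_i$ is labeled $1$ then every smaller point of $S$ is labeled $1$ too, so the set of $1$-labeled points is an initial segment (prefix) of the sorted sequence. There are exactly $n+1$ such prefixes, one for each $j$ from $0$ to $n$, giving $\lvert\dich{[0,a]}{S}\rvert \le n+1$.

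For the matching lower bound I would exhibit, for each $j$, a witnessing threshold: taking $a \in [x_j, x_{j+1})$ realizes the prefix of length $j$ for $1 \le j \le n-1$, the choice $a \ge x_n$ realizes $j = n$, and a threshold strictly below $\min S$ realizes the empty prefix $j = 0$. The single point needing care --- and the only real subtlety in an otherwise routine counting argument --- is this empty labeling: because the guard is $0 \le x \le a$ and the points lie in $[0,1]$, realizing ``all zeros'' requires driving $a$ below every point of $S$ (e.g.\ $a < 0$, making the interval $[0,a]$ empty), so one must be explicit about the admissible range of $a$ at the lower endpoint. Since all $n+1$ prefixes are pairwise distinct labelings and each is realizable, we conclude $\lvert\dich{[0,a]}{S}\rvert = n+1 = |S|+1$. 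Everything else is bookkeeping, so I expect the boundary convention at the lower endpoint to be the only place demanding attention.
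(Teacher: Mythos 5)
Your argument is essentially the paper's: sort $S$ increasingly and count the $|S|+1$ equivalence classes of the threshold $a$ (equivalently, observe that realizable labelings are exactly prefixes of the sorted order), so the core counting is correct, and your monotonicity-based upper bound is actually spelled out more explicitly than in the paper, whose proof only lists the witness classes $a < x_1$, $x_i < a < x_{i+1}$, and $a > x_m$. The one place you diverge is precisely the spot you flagged as the ``only real subtlety'': the all-zeros dichotomy. Your proposed witness $a < 0$ steps outside the program family, which the paper defines as intervals $[0,a] \subseteq [0,1]$ with the hole filled by a constant $c \in [0,1]$; an empty interval is not available. Within the family, all-zeros is realizable iff $0 \notin S$ (take any $a \in [0, \min S)$), and when $0 \in S$ the claim as literally stated is \emph{false}: one gets $\lvert\dich{[0,a]}{S}\rvert = |S|$. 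The paper does not patch this---it states outright, immediately before its proof, that the claim is ``technically not correct'' when $0 \in S$, and dismisses that case because $S$ is sampled from an atomless distribution over $[0,1]$, so $0 \in S$ occurs with probability zero. So you correctly located the boundary issue, but your resolution silently enlarges the hypothesis class to make the exceptional dichotomy realizable, whereas the sound options are to concede the measure-zero exception (as the paper does) or to restate the claim for $S \subset (0,1]$.
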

Technically, this claim is not correct:
when $0 \in S$, the number of dichotomies is one fewer.
However, $S$ is obtained by sampling from a distribution,
and if the distribution over $[0,1]$ does not contain atoms,
then this case almost surely does not happen.
We omit this detail for simpler presentation throughout.
\begin{proof}
Let the elements of $S = \{x_1,\ldots,x_m\}$ be ordered increasingly;
there are exactly $|S|+1$ equivalence classes of programs based on the choice of $a$:
one from $a < x_1$, one from $a > x_m$,
and $(|S|-1)$-many from $x_i < a < x_{i+1}$ for $i \in \{1,\ldots,m-1\}$.
\end{proof}

\section{Varying Synthetic Problem Parameters}\label{sec:apptoy}
\label{app:apptoy}

In this section, we provide the complete description of the synthetic benchmarks and present the complete plots
of our evaluation.

We consider a class of hyperrectangle programs for which we can compute the optimal solution exactly;
this lets us compare the results of our implementation to an ideal baseline.
Here, the concept class \progs (i.e., the set of programs) is defined as the set of axis-aligned hyperrectangles within $[-1,1]^d$,
and the input distribution $\dist$ is such that inputs are
 distributed uniformly over $[-1,1]^d$.
We fix some probability mass $b$ and aim to synthesize a program that is close
to a functional specification of the form
$0 \leq x_1 \leq 2b \land \bigwedge_{i \in \{2, \ldots, d\}} -1 \leq x_i \leq 1$,
which only returns 1 for points whose first coordinate is positive and at most $2b$.
We fix the following postcondition:
\[
\pr{\prog(\vec{x}) = 1 \mid x_1 \leq 0} \geq \pr{\prog(\vec{x}) = 1 \mid x_1 \geq 0}
\land
\pr{\prog(\vec{x}) = 1} \geq b.
\]
In other words, a correct hyperrectangle must include as much probability mass
of points whose first coordinate is negative as it does for those with a positive first coordinate,
and additionally it must include at least as much probability mass as the original hyperrectangle.
Observe that independent of $d$, the best error for a correct solution is exactly $b$
(and there exist dense regions of $\alpha$-robust programs
that have error $b+\alpha$).

We consider problem instances formed from combinations of $d \in \{1, 2, 3\}$
and $b \in \{0.05, 0.1, 0.2\}$.
As $d$ increases, the set of programs increases in complexity
(in fact, it has VC dimension $2d$)
and the synthesis queries become more expensive.
As $b$ increases, the threshold used by the optimization cannot be as small,
so we expect the search to benefit less from our optimizations.
We run our implementation using thresholds $\thresh \in \{0.07, 0.15, 0.3, 0.5, 1\}$,
omitting those values for which $\thresh < b$;
additionally, we also consider an adaptive run
where $\thresh$ is initialized as the value $1$,
and whenever a new best solution is enumerated with error $k$
we update $\thresh \gets k$.

Each combination of parameters was run for a period of 2 minutes.
Figure~\ref{fig:toygraphs} shows
each of the following as a function of time:
\rone the depth completed by the search (i.e.\ the current size of the sample set), and
\rtwo the best solution found by the search.

\begin{figure}[t]
\centering

\pgfplotsset{filter discard warning=false}
\pgfplotscreateplotcyclelist{mylist}{
    {black, densely dotted},
    {black},
    {blue!40!black},
    {blue!80!black},
    {blue!80},
    {blue!40}
}
\newcommand{\depthplot}[2]{%
\begin{tikzpicture}[baseline]
    \begin{axis}[
            tiny,
            cycle list name=mylist,
            xmin=0, xmax=120,
            ymin=0, ymax=150,
            width=1.8in,
            height=1.5in
        ]
        \foreach \n in {#2}{
            \addplot+[const plot, no marks] table
                [x=\n_dvt_time, y=\n_dvt_depth, col sep=comma]{#1};
        }
    \end{axis}    
\end{tikzpicture}
}
\newcommand{\errorplot}[4][]{%
\begin{tikzpicture}[baseline]
    \begin{axis}[
            tiny,
            cycle list name=mylist,
            xmin=0.3, xmax=120,
            ymin=0, ymax=0.5,
            xmode=log,
            width=1.8in,
            height=1.5in,
            legend style={row sep=-2pt}
        ]
        \foreach \n in {#3}{
            \addplot+[mark=diamond] table [x=\n_evt_time, y=\n_evt_error, col sep=comma]{#2};
        }
        \addplot[dashed, domain=0.3:120] {#4};
        \ifstrempty{#1} {%
        } {%
            \legend{adaptive, $\thresh=1$, $\thresh=0.5$,
                    $\thresh=0.3$, $\thresh=0.15$, $\thresh=0.07$}
        }
    \end{axis}
\end{tikzpicture}
}

Plots of Completed Depth vs Time (s)

{\scriptsize
\begin{tabular}{crrr}
& \multicolumn{1}{c}{dim = 1} & \multicolumn{1}{c}{dim = 2} & \multicolumn{1}{c}{dim = 3} \\
\rotatebox{90}{~optimal at 0.05} &
\depthplot{figures/toy_csv/box_d1_b01_s1.csv}{a1, o1, o05, o03, o015, o007} &
\depthplot{figures/toy_csv/box_d2_b01_s1.csv}{a1, o1, o05, o03, o015, o007} &
\depthplot{figures/toy_csv/box_d3_b01_s1.csv}{a1, o1, o05, o03, o015, o007} \\
\rotatebox{90}{~optimal at 0.1} &
\depthplot{figures/toy_csv/box_d1_b02_s1.csv}{a1, o1, o05, o03, o015} &
\depthplot{figures/toy_csv/box_d2_b02_s1.csv}{a1, o1, o05, o03, o015} &
\depthplot{figures/toy_csv/box_d3_b02_s1.csv}{a1, o1, o05, o03, o015} \\
\rotatebox{90}{~optimal at 0.2} &
\depthplot{figures/toy_csv/box_d1_b04_s1.csv}{a1, o1, o05, o03} &
\depthplot{figures/toy_csv/box_d2_b04_s1.csv}{a1, o1, o05, o03} &
\depthplot{figures/toy_csv/box_d3_b04_s1.csv}{a1, o1, o05, o03}
\end{tabular}
}

Plots of Best Error vs Time (s)

{\scriptsize
\begin{tabular}{crrr}
& \multicolumn{1}{c}{dim = 1} & \multicolumn{1}{c}{dim = 2} & \multicolumn{1}{c}{dim = 3} \\
\rotatebox{90}{~optimal at 0.05} &
\errorplot[legend]{figures/toy_csv/box_d1_b01_s1.csv}{a1, o1, o05, o03, o015, o007}{.05} &
\errorplot{figures/toy_csv/box_d2_b01_s1.csv}{a1, o1, o05, o03, o015, o007}{.05} &
\errorplot{figures/toy_csv/box_d3_b01_s1.csv}{a1, o1, o05, o03, o015, o007}{.05} \\
\rotatebox{90}{~optimal at 0.1} &
\errorplot{figures/toy_csv/box_d1_b02_s1.csv}{a1, o1, o05, o03, o015}{.1} &
\errorplot{figures/toy_csv/box_d2_b02_s1.csv}{a1, o1, o05, o03, o015}{.1} &
\errorplot{figures/toy_csv/box_d3_b02_s1.csv}{a1, o1, o05, o03, o015}{.1} \\
\rotatebox{90}{~optimal at 0.2} &
\errorplot{figures/toy_csv/box_d1_b04_s1.csv}{a1, o1, o05, o03}{.2} &
\errorplot{figures/toy_csv/box_d2_b04_s1.csv}{a1, o1, o05, o03}{.2} &
\errorplot{figures/toy_csv/box_d3_b04_s1.csv}{a1, o1, o05, o03}{.2}
\end{tabular}
}

\caption{Performance on synthetic hyperrectangle examples with varying parameters.}
\label{fig:toygraphs}
\end{figure}

\section{Thermostat Benchmark}\label{sec:apptherm}

Here we include the specification of our modified version
of the thermostat controller synthesis benchmark~\cite{chaudhuri14}.
Figure~\ref{fig:thermcode} shows the definitions of
\texttt{pre}, which describes the probability distribution $\dist$ over the inputs,
and \texttt{thermostat},
a program sketch describing the set of possible programs.
We handle the thermostat loop (line~\ref{line:thermloop})
through syntactic unrolling, since it has a constant bound:
\emph{Unrollings} is the value we instantiate from $\{5, 10, 20, 40\}$
in the creating of problem instances for our experiments.
Similarly, the threshold \emph{N} in line~\ref{line:thermN}
is instantiated from $\{2, 4, 8\}$.

\begin{figure}[t]
\centering
\lstset{
    language=C++,
    basicstyle=\fontfamily{pcr}\selectfont\scriptsize,
    alsoletter=?,
    morekeywords={assert,??},
    numbers=left,
    escapechar=|,
    tabsize=2,
    literate={\ \ }{{\ }}1 
}
\begin{tabular}{l}
\begin{lstlisting}
double, double pre() {
    double modal = Uniform({1, 2, 3});
    double lin;
    if(modal == 1) {
        lin = gaussian(mean=30, variance=9);
    } else if (modal == 2) {
        lin = gaussian(mean=35, variance=9);
    } else {
        lin = gaussian(mean=50, variance=9);
    }
    double ltarget = gaussian(mean=75, variance=1);
    return lin, ltarget;
}
\end{lstlisting}
\vspace{1em}
\\
\begin{lstlisting}[
        emph={N,Unrollings},
        emphstyle={\itshape}
    ]
int thermostat(double lin, double ltarget) {
    double h = ??(0,10);|\label{line:thermhole}|
    double tOn = ltarget + ??(-10,0);
    double tOff = ltarget + ??(0,10);
    double isOn = 0.0;
    double K = 0.1;
    double CurL = lin;
    assert(tOn < tOff; 0.9);|\label{line:thermassert}|
    assert(h > 0; 0.9);
    assert(h < 20; 0.9);
    for(int i = 0; i < Unrollings; i = i + 1) {|\label{line:thermloop}|
        if(isOn > 0.5) {
            curL = curL + (h - K * (curL - lin));
            if(curL > tOff) {
                isOn = 0.0;
            }
        } else {
            curL = curL - K * (curL - lin);
            if(curL < tOn) {
                isOn = 1.0;
            }
        }
        assert(curL < 120; 0.9);
    }
    Error = abs(curL - ltarget);
    if(Error < N) {|\label{line:thermN}|
        return 0;
    } else {
        return 1;
    }
}
\end{lstlisting}
\end{tabular}
\caption{Program sketch defining the set of possible thermostat controllers.}
\label{fig:thermcode}
\end{figure}

A synthesized program instantiates the sketch
by replacing the \emph{holes} with real-valued constants:
for example, the syntax in the thermostat definition at line~\ref{line:thermhole}
specifies that the synthesizer must replace the right side of the assignment
with a constant between 0 and 10.
The \texttt{assert} statements form the probabilistic postcondition:
if we have the set of assert statements in the program 
$\{\texttt{assert}(\mathit{event}_i; \theta_i);\}_{i \in I}$,
then the postcondition is given by the following conjunction:
$\bigwedge_{i \in I} \pr{\mathit{event}_i} > \theta_i$.
(Recall that the loop is syntactically unrolled
and observe that all execution paths encounter all \texttt{assert} statements,
so this is well-defined.)

}{%
}

\end{document}